\newcommand{\Z}{\mathbb{Z}}
\newcommand{\R}{\mathbb{R}}
\newcommand{\C}{\mathbb{C}}
\newcommand{\N}{\mathbb{N}}
\newcommand{\deq}{\mathrel{\mathop:}=}
\newcommand{\eqd}{=\mathrel{\mathop:}}
\renewcommand{\epsilon}{\varepsilon}
\newcommand{\genarg} {\,\cdot\,}
\newcommand{\adj}{^*} 
\newcommand{\mat}[1]{\begin{pmatrix} #1 \end{pmatrix}}
\newcommand{\scalar}[2]{\langle{#1} \mspace{2mu}, {#2}\rangle}
\DeclareMathOperator{\su}{su}
\newcommand{\ket}[1]{| #1 \rangle}
\newcommand{\abs}[1]{\lvert #1 \rvert}
\newcommand{\absb}[1]{\big\lvert #1 \big\rvert}
\newcommand{\absB}[1]{\Big\lvert #1 \Big\rvert}
\newcommand{\norm}[1]{\lVert #1 \rVert}
\newcommand{\normb}[1]{\big\lVert #1 \big\rVert}
\newcommand{\normB}[1]{\Big\lVert #1 \Big\rVert}
\newcommand{\qb}[1]{\bigl[{#1}\bigr]}
\newcommand{\qB}[1]{\Bigl[{#1}\Bigr]}
\newcommand{\qbb}[1]{\biggl[{#1}\biggr]}
\newcommand{\qBB}[1]{\Biggl[{#1}\Biggr]}
\newcommand{\p}[1]{({#1})}
\newcommand{\pb}[1]{\bigl({#1}\bigr)}
\newcommand{\pB}[1]{\Bigl({#1}\Bigr)}
\newcommand{\h}[1]{\{{#1}\}}
\newcommand{\hb}[1]{\bigl\{{#1}\bigr\}}
\newcommand{\hB}[1]{\Bigl\{{#1}\Bigr\}}
\newcommand{\umat}{\mathbbmss{1}} 
\newcommand{\st}{\; : \;}
\newcommand{\qu}[1]{\widehat{#1}}
\newcommand{\quant}{^{\widehat{\mspace{20mu}}}}
\newcommand{\contr}{\rightharpoonup}
\newcommand{\wick}[1]{\,{\mathrel{\mathop:} #1 \mathrel{\mathop:}}\,}
\newtheorem{theorem}{Theorem}
\newtheorem{lemma}{Lemma}
\newtheorem*{remark*}{Remark}
\newtheorem*{remarks*}{Remarks}
\newtheorem*{example}{Example}
\begin{document}

\title{Semi-Classical Dynamics in Quantum Spin Systems}
 
\author{J\"urg Fr\"ohlich$^1$\footnote{juerg@itp.phys.ethz.ch} \qquad Antti Knowles$^1$\footnote{aknowles@itp.phys.ethz.ch} \qquad Enno Lenzmann$^2$\footnote{lenzmann@math.mit.edu} 
\\
\\
\footnotesize \it $^1$Institute of Theoretical Physics, ETH H\"onggerberg,
CH-8093 Z\"urich, 
Switzerland.
\\
\footnotesize \it $^2$Department of Mathematics, Massachusetts Institute of Technology,
Cambridge, MA 02139-4307.
\\
}
\maketitle

\begin{abstract}
We consider two limiting regimes, the large-spin and the mean-field limit, for the dynamical evolution of quantum spin systems. We prove that, in these limits, the time evolution of a class of quantum spin systems is determined by a corresponding Hamiltonian dynamics of classical spins. This result can be viewed as a Egorov-type theorem. We extend our results to the thermodynamic limit of lattice spin systems and continuum domains of infinite size, and we study the time evolution of coherent spin states in these limiting regimes.
\end{abstract}

\section{Introduction}

The purpose of this letter is to study classical limits of quantum spin systems. Work in this direction was undertaken already at the beginning of the seventies. Yet, most of the mathematical results established so far only concern time-independent aspects, such as the classical limit of quantum partition functions for spin systems \cite{Lieb1973, Simon1980}. Here we consider the dynamical evolution of quantum spin systems in limiting regimes; see also \cite{Werner1995}. In particular, we discuss {\em i) the large-spin limit} and {\em ii) the mean-field/continuum limit}. As our main results, we prove that the time evolution of a large class of quantum spin systems approaches the time evolution of classical spins. Our results can be regarded as {\em Egorov-type theorems,} asserting that quantization commutes with time evolution in the classical limit; see \cite{FKP2007} for a similar result on classical and quantum Bose gases. Along the way, we discuss thermodynamic limits and the time evolution of coherent spin states, in the two limits mentioned above.

An example of an evolution equation for classical spins is the \emph{Landau-Lifshitz equation}
\begin{equation} \label{eq:LLintro}
\partial_t M = M \wedge H_{\mathrm{ex}}(M)\,,
\end{equation}
which is widely used in the study of ferromagnetism. Here $M = M(t,x) \in \mathbb{S}^2$ denotes a classical spin field with values on the unit sphere, and $\wedge$ stands for the vector product in $\R^3$. A standard choice for the exchange field is $H_{\mathrm{ex}}(M) = J \, \Delta M$, where $\Delta$ denotes the Laplacian and $J$ is the exchange coupling constant. Equation \eqref{eq:LLintro} then becomes
\begin{equation} \label{eq:LLintro2}
\partial_t M = J \,  M \wedge \Delta M \,.
\end{equation}
This form of the Landau-Lifshitz equation has been studied in the mathematical literature; see for instance \cite{GKT2007, KruzikProhl2006} and references given there. In physical terms, \eqref{eq:LLintro2} describes the dynamics of spin waves in a ferromagnet with nearest neighbor exchange interactions in a classical regime; see \cite{SulemBardos1986}.

In this paper we consider the Landau-Lifshitz equation with an exchange field $H_{\mathrm{ex}}(M)$ given by an integral operator applied to $M$, and generalizations thereof. Equation \eqref{eq:LLintro} then takes the form
\begin{equation} \label{eq:LLintro3}
\partial_t M(t,x) = M(t,x) \wedge \int J(x,y) \, M(t,y) \, dy \,.
\end{equation}
The integral kernel $J(x,y)$ describes the exchange interactions between classical spins beyond the nearest-neighbor approximation in the continuum limit. A formal argument on how to derive \eqref{eq:LLintro2} from \eqref{eq:LLintro3} is given in a remark in Sect.~\ref{sec:dyn:mf}.

Our paper is organized as follows. In Sect.~\ref{section large-spin limit}, we study the dynamics of finite lattice systems of quantum spins in the limit where their spin $s$ approaches $\infty$. The main result of Sect.~\ref{section large-spin limit} is formulated in Theorem \ref{egorov for bounded region} below. In order to prepare the ground for this theorem and its proof, we first introduce a class of Hamilton functions for classical spins and define their quantization by means of a normal-ordering prescription. At the end of Sect.~\ref{section large-spin limit}, we pass to the thermodynamic limit, and we discuss the time evolution of coherent spin states.

In Sect.~\ref{sec:mf:limit}, we present a similar analysis for the mean-field limit of quantum spin systems defined on a lattice with spacing $h > 0$ in the continuum limit, $h \rightarrow 0$. The main result of Sect.~\ref{sec:mf:limit} is stated in Theorem \ref{theorem: continuum egorov for bounded domain} below.

\section{Large-Spin Limit} \label{section large-spin limit}

\subsection{A system of classical spins}
Let $\Lambda$ be a finite subset of the lattice $\Z^d$ (or any other lattice). A classical spin system on $\Lambda$ is described in terms of the finite-dimensional phase space 
\begin{equation*}
\Gamma_\Lambda \;\deq\; \prod_{x \in \Lambda} \mathbb{S}^2\, ,
\end{equation*}
i.\,e.\  we associate an element $M(x)$ of the unit two-sphere $\mathbb{S}^2 \subset \R^3$ with each site $x \in \Lambda$. The phase space $\Gamma_\Lambda$ is conveniently coordinatized as follows. For each site $x \in \Lambda$, let $(M_1(x), M_2(x), M_3(x))$ denote the three cartesian components of a unit vector $M(x) \in \mathbb{S}^2$, and define the complex coordinate functions $(M_+(x), M_z(x), M_-(x))$ on $\mathbb{S}^2$ by 
\begin{equation*}
M_{\pm}(x) \;\deq\; \frac{M_1(x) \pm i M_2(x)}{\sqrt{2}}\,,\qquad M_z(x) \;\deq\; M_3(x) \,.
\end{equation*}
We define a Poisson structure\footnote{Actually $\Gamma_\Lambda$ is symplectic, with symplectic structure determined by the usual one on $\mathbb{S}^2$.} on $\Gamma_\Lambda$ by setting
\begin{equation} \label{poisson bracket in ladder rep}
\{M_i(x), M_j(y)\} \;=\; i \, \tilde{\epsilon}_{i j k} \, \delta(x,y) \, M_k(x)\, .
\end{equation}
Here $\delta(x,y)$ stands for the Kronecker delta, and the indices $i,j,k$ run through the index set $I \deq \{+, z, - \}$, where the symbol $\tilde{\epsilon}_{ijk}$ is defined as $\tilde{\epsilon}_{\pm \mp z} = \pm 1$, $\tilde{\epsilon}_{\pm z \pm} = \mp 1$, $\tilde{\epsilon}_{z \pm \pm} = \pm 1$, and $\tilde{\epsilon}_{ijk} =0 $ otherwise.

For our purposes it is convenient (but not necessary) to replace $\mathbb{S}^2$ with the closed unit ball $\overline{B_1(0)} \subset \R^3$. To this end, we introduce a larger ``phase space'' (a Poisson manifold)
\begin{equation*}
\Xi_\Lambda \;\deq\; \prod_{x \in \Lambda} \overline{B_1(0)}\,,
\end{equation*}
equipped with the $l^\infty$-norm.
The algebra 
\begin{equation*}
\mathfrak{P}_\Lambda \deq \C \big [\{M_i(x) \st i \in I, x \in \Lambda\} \big ]
\end{equation*}
of complex polynomials is a Poisson algebra with Poisson bracket determined by \eqref{poisson bracket in ladder rep}. We equip $\mathfrak{P}_{\Lambda}$ with the norm $\norm{A}_\infty \;\deq\; \sup_{M \in \Xi_\Lambda} \abs{A(M)}$, and we denote its norm closure by $\mathfrak{A}_\Lambda$. Note that, by the Stone-Weierstrass theorem, $\mathfrak{A}_\Lambda$ is the algebra of continuous complex-valued functions on $\Xi_\Lambda$. 

A fairly general class of Hamilton functions $H_\Lambda$ on $\Xi_\Lambda$ may be described as follows: We associate with each multi-index $\alpha \in \N^{I \times \Z^d}$ satisfying $\abs{\alpha} \deq \sum_{x \in \Z^d} \sum_{i \in I} \alpha_i(x) < \infty$
a complex number $V(\alpha)$. Using the trivial embedding $\N^{I \times \Lambda} \subset \N^{I \times \Z^d}$ defined by adjoining zeroes, we consider Hamilton functions of the form
\begin{equation} \label{classical Hamilton function}
H_\Lambda \;\deq\; \sum_{\alpha \in \N^{I \times \Lambda}} V(\alpha)\,M^\alpha\,.
\end{equation}
In order to obtain a real-valued $H_\Lambda$, we require that $\overline{V(\alpha)} = V(\overline{\alpha})$, where the ``conjugation'' $\overline{\alpha}$ of a multi-index $\alpha$ is defined as $\overline{\alpha}_i(x) \deq \alpha_{\overline{i}}(x)$, with $\overline{\genarg}: (+,z,-) \mapsto (-,z,+)$. Furthermore, we impose the following bound on the interaction potential:\footnote{Note that this condition may be weakened by replacing $e^n$ with $e^{rn}$, for any $r > 0$. It may be checked that this does not affect the following results.}
\begin{equation} \label{norm of V}
\norm{V} \;\deq\; \sum_{n \in \N} \sup_{x \in \Z^d} \sum_{\substack{\alpha \in \N^{I \times \Z^d} \\ \abs{\alpha} = n}} \abs{\alpha(x)} \, \abs{V(\alpha)} \, e^n \;<\; \infty\,.
\end{equation}
It is then easy to see that the series \eqref{classical Hamilton function} converges in norm and that the set of allowed interaction potentials $V$ is a Banach space. The Hamiltonian equation of motion reads $\dot{A} = \{H_\Lambda, A\}$, for any observable $A \in \mathfrak{A}_\Lambda$. In particular, a straightforward calculation yields
\begin{equation} \label{hamiltonian equation}
\frac{d}{dt} \,M_i(t,x) \;=\; \sum_{\alpha \in \N^{I \times \Lambda}} V(\alpha) \sum_{j,k} i \tilde{\epsilon}_{jik} \, \alpha_j(x) \, M^{\alpha - \delta_j(x) + \delta_k(x)}(t)\,,
\end{equation}
where the multi-index $\delta_i(x)$ is defined by $[\delta_i(x)]_j(y) \deq \delta_{ij} \delta(x,y)$.

We record the following well-posedness result for the dynamics generated by the class of Hamiltonians introduced above.
\begin{lemma} \label{global well-posedness of ll}
Let $\Lambda$ be a (possibly infinite) subset of $\Z^d$. Let $M_0 \in \Xi_\Lambda$. Then the Hamiltonian equation \eqref{hamiltonian equation} has a unique global-in-time solution $M \in C^1(\R, \Xi_\Lambda)$ that satisfies $M(0) = M_0$. Moreover, the solution $M$ depends continuously on the initial condition $M_0$, and we have the pointwise conservation law $\abs{M(t,x)} = \abs{M(0,x)}$ for all $t$.
\end{lemma}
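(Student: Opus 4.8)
The plan is to regard the equation of motion \eqref{hamiltonian equation} as an autonomous ODE $\dot M = F(M)$ in the Banach space $E$ of bounded $\C^I$-valued functions on $\Lambda$ equipped with the $l^\infty$-norm (so that $\Xi_\Lambda$ is precisely the closed unit ball of $E$), to solve it by the Picard--Lindel\"of theorem, and then to use the conservation of $\abs{M(t,x)}$ as the a priori bound that turns the local solution into a global one. Writing the right-hand side of \eqref{hamiltonian equation} as $F(M)$, we note that $F$ is, at least formally, a power series in the coordinates $M_i(x)$ whose homogeneous part of degree $n$ has coefficients controlled by $\sup_{x}\sum_{\abs{\alpha}=n}\abs{\alpha(x)}\,\abs{V(\alpha)}$, i.e.\ by the $n$-th summand of $\norm{V}$ with the weight $e^n$ removed; the $x$-independence of this bound is exactly what will make the whole argument uniform in $\Lambda$, including the case $\abs{\Lambda}=\infty$.

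The hard part --- and essentially the only step with any content --- is to show that $F$ extends to a bounded and Lipschitz map on the open ball $B_\rho\subset E$ of some radius $\rho$ with $1<\rho<e$. Here one uses the elementary estimate $\abs{M^\beta-(M')^\beta}\le\abs{\beta}\,\rho^{\,\abs{\beta}-1}\,\norm{M-M'}_\infty$ for $M,M'\in B_\rho$, valid because every monomial $M^\beta$ occurring in \eqref{hamiltonian equation} has degree $\abs{\beta}=\abs{\alpha}$. Summing over $\alpha$, estimating $\sum_{j,k}\alpha_j(x)\le 3\abs{\alpha(x)}$, and using that the ``extra room'' in \eqref{norm of V} gives $\abs{\alpha}\,\rho^{\,\abs{\alpha}-1}\le C_\rho\, e^{\abs{\alpha}}$ for a finite constant $C_\rho$ (possible precisely because $\rho/e<1$, so the polynomial factor $\abs{\alpha}$ is absorbed), one arrives at
\begin{equation*}
\norm{F(M)-F(M')}_\infty \;\le\; 3C_\rho\norm{V}\,\norm{M-M'}_\infty , \qquad \norm{F(M)}_\infty \;\le\; 3\norm{V} \qquad (M,M'\in B_\rho).
\end{equation*}
In particular, the combinatorial factor $\abs{\alpha}$ coming from differentiating a degree-$\abs{\alpha}$ monomial is exactly what forces an exponential weight in \eqref{norm of V}, which also explains why any weight $e^{rn}$ with $r>0$ would suffice, as the footnote there claims.

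With this in hand, the Picard iteration produces, for every $M_0\in\Xi_\Lambda$ --- which lies at distance $\rho-1>0$ from $\partial B_\rho$ --- a unique solution $M\in C^1((-T_0,T_0),B_\rho)$ with $M(0)=M_0$, on a time interval $T_0>0$ depending only on $\rho$ and $\norm{V}$; the Lipschitz bound together with Gronwall's inequality yields $\norm{M(t)-M'(t)}_\infty\le e^{3C_\rho\norm{V}\abs{t}}\norm{M_0-M_0'}_\infty$, hence continuous dependence on the initial data. To see that $M(t)$ stays in $\Xi_\Lambda$, I compute along the local solution $\frac{d}{dt}\abs{M(t,x)}^2=\frac{d}{dt}\big(2M_+(t,x)M_-(t,x)+M_z(t,x)^2\big)$; inserting \eqref{hamiltonian equation} and using the antisymmetry properties of $\tilde{\epsilon}_{ijk}$ --- equivalently, the fact that $\abs{M(x)}^2$ is a Casimir of the Poisson algebra $\mathfrak{P}_\Lambda$, so that $\{H_\Lambda,\abs{M(x)}^2\}=0$ --- all contributions cancel and $\frac{d}{dt}\abs{M(t,x)}^2=0$. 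Therefore $\abs{M(t,x)}=\abs{M_0(x)}\le1$ throughout the interval of existence, so $M(t)\in\Xi_\Lambda$.

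Finally, globalization is the standard continuation argument: on its maximal interval of existence the solution remains in $\Xi_\Lambda$, the closed unit ball of $E$, which sits at distance $\rho-1$ from $\partial B_\rho$ and on which $F$ is bounded and Lipschitz; hence the solution can be prolonged by a further time step of the fixed length $T_0$, and iterating shows that the maximal interval is all of $\R$. Uniqueness on each step gives uniqueness in $C^1(\R,\Xi_\Lambda)$, and applying the Gronwall estimate successively on the finitely many steps covering a given compact interval $[-T,T]$ gives continuous dependence of $M$ on $M_0$ in $C^1([-T,T],\Xi_\Lambda)$ for every $T$. The only place that needs real care is the Lipschitz estimate for $F$ --- i.e.\ checking that \eqref{norm of V} dominates the series obtained after term-by-term differentiation; the rest is the textbook theory of ODEs in Banach spaces.
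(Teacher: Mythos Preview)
Your argument is correct and is essentially a detailed fleshing-out of the paper's own proof, which merely invokes a contraction mapping argument for local existence, ``standard arguments'' for continuous dependence, and the observation that $\frac{d}{dt}M(t,x)\perp M(t,x)$ for the conservation law. Your explicit Lipschitz bound on $B_\rho$ with $1<\rho<e$, exploiting the exponential weight in \eqref{norm of V} to absorb the factor $\abs{\alpha}$, is exactly the content the paper omits; the Casimir computation $\{H_\Lambda,\abs{M(x)}^2\}=0$ is equivalent to the paper's perpendicularity statement, and the globalization step is the one both arguments tacitly rely on.
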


\begin{proof}
Local-in-time existence and uniqueness follows from a simple contraction mapping argument for the integral equation associated with \eqref{hamiltonian equation}. We omit the details. Also, continuous dependence on $M_0$ follows from standard arguments. Finally, the claim that $\abs{M(0,x)} = \abs{M(t,x)}$ for all $t$ can be easily verified by using \eqref{hamiltonian equation}, which implies that $\frac{d}{dt} M(t,x)$ is perpendicular to $M(t,x)$.
\end{proof}

\begin{remarks*}
{\em 
1. In what follows, we denote the flow map $M_0 \mapsto M(t)$ by $\phi^t_\Lambda$. Note that, under our assumptions, \eqref{hamiltonian equation} also makes sense for infinite $\Lambda \subset \Z^d$, whereas the Hamiltonian $H_\Lambda$ does not have a limit when $\abs{\Lambda} \to \infty$.
\\
2. The last statement implies that the magnitude of each spin remains constant in time, i.\,e.\  the spins precess. In particular, if $M_0 \in \Gamma_\Lambda$, it follows that $M(t) \in \Gamma_\Lambda$ for all $t$. Mathematically, this is simply the statement that the symplectic leaves of the Poisson manifold $\Xi_\Lambda$ remain invariant under the Hamiltonian flow.
\\
3. Time-dependent potentials $V(t,\alpha)$ may be treated without additional complications, provided that the map $t \mapsto V(t)$ is continuous (in the above norm) and $\sup_x$ in \eqref{norm of V} is replaced by $\sup_{x,t}$. The weaker assumption that $t \mapsto V(t,\alpha)$ is continuous for all $\alpha$ implies Lemma \ref{global well-posedness of ll} with the slightly weaker statement that $M \in C(\R, \Xi_\Lambda)$ is a classical solution of \eqref{hamiltonian equation}.
}
\end{remarks*}

\begin{example}
{\em
Consider the Hamiltonian
\begin{equation}
H_\Lambda(t) \;=\; -\sum_{x \in \Lambda} h(t,x) \cdot M(x) 
-\frac{1}{2} \sum_{x,y \in \Lambda} J(x,y) \, M(x) \cdot M(y)\,,
\end{equation}
where $M(x) = (M_1(x), M_2(x), M_3(x))$. Here $h(t,x) \in \R^3$ is an ``external magnetic field'' satisfying $\sup_{t \in \R, x \in \Z^d} \abs{h(t,x)} < \infty$. We also require the map $t \mapsto h(t,x)$ to be continuous for all $x \in \Z^d$. The exchange coupling $J:\Z^d \times \Z^d \rightarrow \R$ is assumed to be symmetric and to satisfy $J(x,x) = 0$ for all $x$. Finally we assume, in accordance with condition \eqref{norm of V}, that $\sup_{x \in \Z^d} \sum_{y \in \Z^d}\abs{J(x,y)}  < \infty$. The corresponding equation of motion for $M(t,x)$ is given by
\begin{equation} \label{landau-lifschitz}
\frac{d}{dt} \, M(t,x) \;=\; M(t,x) \wedge \qbb{h(t,x) + \sum_{y \in \Lambda} J(x,y) M(t,y)}\,,
\end{equation}
i.\,e.\  the \emph{Landau-Lifschitz equation} for a classical lattice spin system. }
\end{example}

\subsection{A system of quantum spins}
In this section we formulate the quantum analogue of the system of classical spins from the previous section. We associate with each point $x \in \Z^d$ a finite-dimensional Hilbert space $\mathscr{H}_x^s \equiv \mathscr{H}_x \deq \C^{2s+1}$ describing a quantum-mechanical spin of magnitude $s$. (Here, and in the following, we refrain from displaying the explicit $s$-dependence whenever it is not needed.) Furthermore, we associate with each finite set $\Lambda \subset \Z^d$ the product space $\mathscr{H}_\Lambda \deq \bigotimes_{x \in \Lambda}\mathscr{H}_x$, and we define the algebra $\qu{\mathfrak{A}}_\Lambda$ as the algebra of (bounded) operators on $\mathscr{H}_\Lambda$, equipped with the operator norm $\| \cdot \|$. 

The spins are represented on $\mathscr{H}_\Lambda$ by a family $\{\qu{S}_i(x) \st i = 1,2,3,\, x \in \Lambda \}$ of operators, where $\qu{S}_i(x)$ is the $i$'th generator of the spin-$s$-representation of $\su(2)$ on $\mathscr{H}_x$, rescaled by $s^{-1}$. In analogy to the complex coordinatization of the classical phase space $\Gamma_\Lambda$ in the previous section, we replace the operators $(\qu{S}_1(x),\qu{S}_2(x),\qu{S}_3(x))$ with $(\qu{S}_+(x),\qu{S}_z(x),\qu{S}_-(x))$ as follows:
\begin{equation*}
\qu{S}_{\pm}(x) \;\deq\; \frac{\qu{S}_1(x) \pm i \qu{S}_2(x)}{\sqrt{2}}\,,\qquad \qu{S}_z(x) \;\deq\; \qu{S}_3 (x)\,, \qquad \text{for all } x \in \Z^d\,.
\end{equation*}
An easy calculation yields  $\norm{\qu{S}_\pm(x)} \leq 1$ and $\norm{\qu{S}_z(x)} = 1$ if $s \geq 1$, and $\norm{\qu{S}_\pm(x)} = \sqrt{2}$ and $\norm{\qu{S}_z(x)} = 1$ if $s = 1/2$.
Furthermore one finds the fundamental commutation relations
\begin{equation} \label{commutation relations in ladder rep}
\qb{\qu{S}_i(x), \qu{S}_j(y)} \;=\; \frac{1}{s} \,\tilde{\epsilon}_{i j k} \, \delta(x,y) \, \qu{S}_k(x)\,
\end{equation}
with $i,j,k \in I$.

\subsection{Quantization}
In order to quantize polynomials in $\mathfrak{P}_\Lambda$ we need a concept of normal ordering.  We say that a monomial $\qu{S}_{i_1}(x_1) \cdots \qu{S}_{i_p}(x_p)$ is \emph{normal-ordered} if $i_k < i_l$ $\Rightarrow$ $k<l$, where $<$ is defined on $I$ through $+< z < -$. We then define normal-ordering by
\begin{equation*}
\wick{\qu{S}_{i_1}(x_1) \cdots \qu{S}_{i_p}(x_p)} \;=\; \qu{S}_{i_{\sigma(1)}}(x_{\sigma(1)}) \cdots \qu{S}_{i_{\sigma(p)}}(x_{\sigma(p)})\,,
\end{equation*}
where $\sigma \in S_p$ is a permutation such that the monomial on the right side is normal-ordered. Next, we define \emph{quantization} $\qu{\genarg}: \mathfrak{P}_\Lambda \rightarrow \qu{\mathfrak{A}}_\Lambda$ by setting
\begin{equation*}
(M_{i_1}(x_1) \cdots M_{i_p}(x_p))\quant \;=\; \wick{\qu{S}_{i_1}(x_1) \cdots \qu{S}_{i_p}(x_p)}\,
\end{equation*}
and by linearity of $\qu{\genarg}$. We set $\qu{1} = \umat$. Note that, by definition, $\qu{\genarg}$ is a linear map (but, of course, not an algebra homomorphism) and satisfies $(\qu{A})^* = \qu{\overline{A}}$. 

\subsection{Dynamics in the large-spin limit}
For each finite $\Lambda \subset \Z^d$ we define the Hamiltonian $\qu{H}_\Lambda$ as the quantization of $H_\Lambda$. More precisely, we quantize \eqref{classical Hamilton function} term by term and note that the resulting series converges in operator norm. Because $H_\Lambda$ is real, the operator $\qu{H}_\Lambda$ is self-adjoint on the finite-dimensional Hilbert space $\mathscr{H}_\Lambda$ and generates a one-parameter group of unitary propagators $U_s(t; \qu{H}_\Lambda)$  (equal to $e^{-is\qu{H}_\Lambda t}$ if $\qu{H}_\Lambda$ is time-independent).

We introduce the short-hand notation
\begin{align*}
\alpha^t_\Lambda A &\;\deq\; A \circ \phi^t_\Lambda \,,\qquad A \in \mathfrak{A}_\Lambda\,,
\\
\qu{\alpha}^t_\Lambda \mathcal{A} &\;\deq\; U_s(t; \qu{H}_\Lambda)^* \, \mathcal{A} \, U_s(t; \qu{H}_\Lambda) \,,\qquad \mathcal{A} \in \qu{\mathfrak{A}}_\Lambda\,,
\end{align*}
where $\phi^t_\Lambda$ is the Hamiltonian flow on $\Xi_\Lambda$.
Note that both $\alpha^t_\Lambda$ and $\qu{\alpha}^t_\Lambda$ are norm-preserving.

We are now able to state and prove our main result for the case of a finite lattice $\Lambda$. Roughly it states that time evolution and quantization commute in the $s \to \infty$ limit. This is a {\em Egorov-type result}.
\begin{theorem} \label{egorov for bounded region}
Let $A \in \mathfrak{P}_\Lambda$ and $\epsilon > 0$. Then there exists a function $A(t) \in \mathfrak{P}_\Lambda$ such that
\begin{equation}
\sup_{t \in \R} \, \norm{\alpha^t_\Lambda A - A(t)}_\infty \;\leq\; \epsilon\,,
\end{equation}
and, for any $t \in \R$,
\begin{equation}
\normb{\qu{\alpha}_\Lambda^t \qu{A}  - \qu{A(t)}} \;\leq\; \epsilon + \frac{C(\epsilon, t, A)}{s}\,,
\end{equation}
where $C(\epsilon,t,A)$ is independent of $\Lambda$.
\end{theorem}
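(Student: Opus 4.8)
The plan is an Egorov-type argument comparing the Dyson (Duhamel) series of the classical and quantum flows term by term, with the \emph{correspondence principle} as input: for polynomials $P,Q\in\mathfrak{P}_\Lambda$ there is an exact identity
\[
is\,\qb{\qu P,\qu Q}\;=\;\qu{\{P,Q\}}\;+\;\tfrac1s\,\qu{R(P,Q)}\,,\qquad R(P,Q)\in\mathfrak{P}_\Lambda\,,
\]
obtained by expanding the commutator of two normal-ordered monomials with \eqref{commutation relations in ladder rep}: the leading part reproduces the Leibniz expansion of the bracket \eqref{poisson bracket in ladder rep} because $is\,\qb{\qu S_i(x),\qu S_j(y)}=i\tilde\epsilon_{ijk}\delta(x,y)\qu S_k(x)$ \emph{exactly}, while each transposition needed to restore normal order contributes a further commutator — hence a factor $1/s$ — to $R(P,Q)$. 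I also record $\norm{\qu P}\le\norm{P}_0$ for $s\ge1$, where $\norm{P}_0\deq\sum_\alpha\abs{p_\alpha}$ (using $\norm{\qu S_i(x)}\le1$), and $\norm{P}_\infty\le\norm{P}_0$ on $\Xi_\Lambda$.

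The hard part is to control $\{H_\Lambda,\genarg\}$, $R(H_\Lambda,\genarg)$ and all their iterates \emph{uniformly in $\Lambda$}, since $H_\Lambda$ has no limit as $\abs{\Lambda}\to\infty$. I would combine (i) locality — $\{M^\alpha,Q\}$ and its quantum analogue vanish unless $\supp\alpha\cap\supp Q\ne\emptyset$ — with (ii) the summability \eqref{norm of V}: equipping $\mathfrak{P}_\Lambda$ with a scale of weighted $l^1$-norms $\norm{\genarg}_r$ on the coefficients (carrying the weight $e^{r\abs\alpha}$ together with the site-weights occurring in \eqref{norm of V}), a cluster-type estimate shows that $\{H_\Lambda,\genarg\}$ and $R(H_\Lambda,\genarg)$ map $\norm{\genarg}_r\to\norm{\genarg}_{r-\eta}$ boundedly, with operator norm $\lesssim\norm{V}/\eta$ — the constant depending only on $\norm{V}$, $\eta$ and the degree and support of the observable, never on $\Lambda$. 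Iterating with weight losses $\eta/n$ and using Stirling's formula then gives $\tfrac1{n!}\norm{\{H_\Lambda,\genarg\}^nP}_{r/2}\lesssim(c\,\norm{V}/r)^n\norm{P}_r$, so the Dyson series converges for $\abs{t}$ small, uniformly in $\Lambda$. (This is where the weight $e^n$ in \eqref{norm of V} is used — to absorb the factorial proliferation of contraction terms under iterated brackets — which is also why it may be weakened to $e^{\rho n}$, at the cost of a smaller radius of convergence.)

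With these estimates in hand, I first treat $\abs{t}=\delta$ with $\delta$ small (depending only on $\norm{V}$). The classical flow equals its Dyson series $\alpha^\delta_\Lambda A=\sum_{n\ge0}\tfrac{\delta^n}{n!}\{H_\Lambda,\genarg\}^nA$, convergent in $\norm{\genarg}_{r/2}$. Its partial sums fail to be polynomials because $\{H_\Lambda,\genarg\}$ involves monomials $M^\alpha$ of unbounded degree, so I replace $H_\Lambda$ by $H_\Lambda^{(m)}\deq\sum_{\abs\alpha\le m}V(\alpha)M^\alpha$ (a polynomial, for finite $\Lambda$) and cut at order $N$, defining $A(\delta)\deq\sum_{n=0}^N\tfrac{\delta^n}{n!}\{H_\Lambda^{(m)},\genarg\}^nA\in\mathfrak{P}_\Lambda$. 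For $m,N$ large, $\norm{\alpha^\delta_\Lambda A-A(\delta)}_\infty\le\norm{\genarg}_0\le\epsilon$ uniformly in $\Lambda$, the errors being the $m$-truncation tail $\sum_{n>m}\sup_x\sum_{\abs\alpha=n}\abs{\alpha(x)}\abs{V(\alpha)}e^n\to0$ (inserted through a Duhamel estimate) and the Dyson tail $\sum_{n>N}(c\,\delta\,\norm{V}/r)^n$. On the quantum side, iterating $\mathcal{L}_s\qu{\genarg}=\qu{\genarg}\,\mathcal{L}^{\mathrm{cl}}_s$ with $\mathcal{L}^{\mathrm{cl}}_s\deq\{H_\Lambda,\genarg\}+\tfrac1sR(H_\Lambda,\genarg)$ gives $\qu\alpha^\delta_\Lambda\qu A=e^{\delta\mathcal{L}_s}\qu A=\qu{\,e^{\delta\mathcal{L}^{\mathrm{cl}}_s}A\,}$, the right-hand series converging in $\norm{\genarg}_{r/2}$ uniformly in $s\ge1$ and $\Lambda$ (since $R(H_\Lambda,\genarg)$ obeys the same estimates and is divided by $s$). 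Hence, using $\norm{\qu P}\le\norm{P}_0$,
\[
\normb{\qu\alpha^\delta_\Lambda\qu A-\qu{A(\delta)}}\;\le\;\normb{e^{\delta\mathcal{L}^{\mathrm{cl}}_s}A-A(\delta)}_0\;\le\;\qB{O(1/s)}+\qB{m\text{-truncation}}+\qB{\text{Dyson tail}}\,,
\]
where the first term comes from $\mathcal{L}^{\mathrm{cl}}_s-\{H_\Lambda,\genarg\}=\tfrac1sR(H_\Lambda,\genarg)$ by a telescoping bound and the second and third are the errors already estimated; all are uniform in $\Lambda$, so $\normb{\qu\alpha^\delta_\Lambda\qu A-\qu{A(\delta)}}\le\epsilon+C(\epsilon,\delta,A)/s$.

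Finally, for arbitrary $t$ I write $\qu\alpha^t_\Lambda=(\qu\alpha^\delta_\Lambda)^{\circ k}$, $\alpha^t_\Lambda=(\alpha^\delta_\Lambda)^{\circ k}$ with $k=\lceil\abs t/\delta\rceil$ steps of length $\le\delta$, and apply the one-step result successively to $A,A(\delta),A(2\delta),\dots$ — each a polynomial in $\mathfrak{P}_\Lambda$, so the previous steps reapply (with constants that grow with $k$, through the degree and support of the iterates, but remain independent of $\Lambda$). Because $\alpha^t_\Lambda$ and $\qu\alpha^t_\Lambda$ are norm-preserving, the per-step errors telescope additively; requiring the $j$-th classical error to be $\le\epsilon/k$ gives $A(t)\deq A(k\delta)$ with $\sup_{t\in\R}\norm{\alpha^t_\Lambda A-A(t)}_\infty\le\epsilon$ and $\normb{\qu\alpha^t_\Lambda\qu A-\qu{A(t)}}\le\epsilon+C(\epsilon,t,A)/s$, $C$ independent of $\Lambda$. (Time-dependent $\qu H_\Lambda$ is handled verbatim, with time-ordered integrals.) The crux throughout is the $\Lambda$-uniformity established in the second paragraph: without the locality/cluster structure and the weighted summability \eqref{norm of V}, every constant above would grow with $\abs{\Lambda}$.
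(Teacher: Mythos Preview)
Your proposal is correct and follows the paper's overall strategy: compare the Lie--Schwinger series for the classical and quantum evolutions, use that $is[\qu S_i(x),\qu S_j(y)]$ reproduces the Poisson bracket \emph{exactly} so that the discrepancy between $(is)^l[\qu H_\Lambda,\qu A]^{(l)}$ and $\widehat{\{H_\Lambda,A\}^{(l)}}$ is purely a reordering error of size $O(1/s)$, and then iterate in time using that both flows are isometric. The paper, however, packages the combinatorics more simply than you do. Instead of a Cauchy--Kovalevskaya scale of weighted norms with incremental losses $\eta/n$, it expands the iterated bracket $\{H_\Lambda,M^\beta\}^{(l)}$ explicitly (your locality observation is the inequality $\sum_{|\alpha|=n}\sum_x|\tilde\epsilon_{ijk}|\alpha_i(x)\gamma_j(x)|V(\alpha)|\le|\gamma|\,\|V\|^{(n)}$) and arrives directly at the bound $l!\,e^{|\beta|}\|V\|^l$, whence the sharp radius $\|V\|^{-1}$ independent of $|\beta|$; no scale machinery is needed. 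Likewise, rather than introducing a modified classical generator $\mathcal L^{\mathrm{cl}}_s=\{H_\Lambda,\cdot\}+\tfrac1s R(H_\Lambda,\cdot)$ and the intertwining $\mathcal L_s\,\qu{\genarg}=\qu{\genarg}\,\mathcal L^{\mathrm{cl}}_s$, the paper simply observes that each term of $(is)^l[\qu H_\Lambda,\qu A]^{(l)}$ is a reordering of the corresponding term of $\widehat{\{H_\Lambda,A\}^{(l)}}$ and bounds the difference by $p^2/s$ for a degree-$p$ monomial, giving $\|R_l\|\le s^{-1}(l+2)!\,e^{|\beta|}\|V\|^l$. Your intertwining viewpoint is conceptually pleasant (and would generalize more readily), while the paper's direct estimate is shorter and yields explicit constants. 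Finally, for the truncation step the paper cuts the full index set $\{(l,\alpha^1,\dots,\alpha^l,x_1,\dots,x_l)\}$ to a finite subset $B_j$ at each time step rather than first replacing $H_\Lambda$ by a finite-degree $H_\Lambda^{(m)}$ and then truncating the order; the two cutoffs are equivalent in effect.
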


\begin{proof}
Without loss of generality we assume that $A = M^\beta$ for some $\beta \in \N^{I \times \Lambda}$. For simplicity of notation we also assume, here and in the following proofs, that $H_\Lambda$ is time-independent. Consider the Lie-Schwinger series for the time evolution of the classical spin system,
\begin{equation} \label{series for classical spins}
\sum_{l = 0}^\infty \frac{t^l}{l!} \,\hb{H_\Lambda, A}^{(l)}\,,
\end{equation}
where $\hb{H_\Lambda, A}^{(l)} = \hB{H_\Lambda, \hb{H_\Lambda, A}^{(l - 1)}}$ and $\hb{H_\Lambda, A}^{(0)} = A$.
In order to compute the nested Poisson brackets we observe that
\begin{equation}
\h{M^\alpha, M^\beta} \;=\; \sum_{x \in \Lambda} \sum_{i,j,k \in I} i \tilde{\epsilon}_{ijk} \, \alpha_i(x) \, \beta_j(x) \, M^{\alpha + \beta - \delta_i(x) - \delta_j(x) + \delta_k(x)}\,,
\end{equation}
as can be seen after a short calculation. Iterating this identity yields
\begin{align}
&\hb{H_\Lambda, A}^{(l)} \;=\; i^l \sum_{\alpha^1, \dots, \alpha^l} \sum_{x_1,\dots,x_l} \sum_{i_1,\dots, i_l}\sum_{j_1,\dots,j_l}\sum_{k_1,\dots,k_l}
\notag \\
&\qquad \qBB{\prod_{q = 1}^l \tilde{\epsilon}_{i_q j_q k_q} \, V(\alpha^q) \, \alpha^q_{i_q}(x_q) \, \qbb{\beta + \sum_{r = 1}^{q - 1} \pb{\alpha^r - \delta_{i_r}(x_r) - \delta_{j_r}(x_r) + \delta_{k_r}(x_r)}}_{j_q} \!\!\!(x_q)}
\notag \\ \label{multiple Poisson bracket}
&\qquad M^{\beta + \sum_{r = 1}^l \pb{\alpha^r - \delta_{i_r}(x_r) - \delta_{j_r}(x_r) + \delta_{k_r}(x_r)}}\,.
\end{align}
In order to estimate this series, we recall that $\norm{M^\gamma}_\infty \leq 1$ and rewrite it by using that
\begin{equation*}
\sum_{\alpha^1, \dots, \alpha^l} = \sum_{n_1, \dots n_l = 1}^\infty \sum_{\abs{\alpha^1} = n_1}\dots \sum_{\abs{\alpha^l} = n_l}\,.
\end{equation*}
We then proceed recursively, starting with the sum over $\alpha^l,x_l, i_l, j_l,k_l$ and, at each step, using that
\begin{equation*}
\sum_{\abs{\alpha} = n} \sum_x \sum_{i,j,k} \, \abs{\tilde{\epsilon}_{ijk}} \, \alpha_i(x) \, \gamma_j(x) \, \abs{V(\alpha)} \;\leq\; \abs{\gamma} \, \norm{V}^{(n)}\,,
\end{equation*}
where 
\begin{equation*}
\norm{V}^{(n)} \;\deq\; \sup_{x \in \Z^d} \sum_{\abs{\alpha} = n} \abs{V(\alpha)} \, \abs{\alpha(x)}\,.
\end{equation*}
In this manner we find that
\begin{align*}
&\normB{\hb{H_\Lambda, A}^{(l)}}_\infty 
\\
&\qquad \leq\; \sum_{n_1, \dots, n_l = 1}^\infty \, \abs{\beta} (\abs{\beta} + n_1) \cdots (\abs{\beta} + n_1 + \cdots + n_{l - 1}) \, \norm{V}^{(n_1)} \cdots \norm{V}^{(n_l)}
\\
&\qquad \leq\; \sum_{n_1, \dots, n_l = 1}^\infty \, (\abs{\beta} + n_1 + \cdots + n_l)^l \, \norm{V}^{(n_1)} \cdots \norm{V}^{(n_l)}
\\
&\qquad \leq\; l! \sum_{n_1, \dots, n_l = 1}^\infty \, e^{\abs{\beta} + n_1 + \cdots + n_l} \, \norm{V}^{(n_1)} \cdots \norm{V}^{(n_l)}
\\
&\qquad =\; l! \, e^{\abs{\beta}} \, \norm{V}^l\,.
\end{align*}
Thus, for $\abs{t} < \norm{V}^{-1}$, the series \eqref{series for classical spins} converges in norm, and an analogous estimate of the remainder of the Lie-Schwinger expansion of $\alpha^t_\Lambda A$ shows that \eqref{series for classical spins} equals $\alpha^t_\Lambda A$. As all estimates are independent of $\Lambda$, the convergence is uniform in $\Lambda$.

The quantum-mechanical case is similar. Consider the Lie-Schwinger series for the time evolution of the quantum spin system:
\begin{equation} \label{series for qm spins}
\sum_{l = 0}^\infty \frac{t^l}{l!}\, (is)^l \, \qb{\qu{H}_\Lambda, \qu{A}}^{(l)}\,,
\end{equation}
where $\qb{\qu{H}_\Lambda, \qu{A}}^{(l)} = \qB{\qu{H}_\Lambda, \qb{\qu{H}_\Lambda, \qu{A}}^{(l - 1)}}$  and $\qb{\qu{H}_\Lambda, \qu{A}}^{(0)} = \qu{A}$. 
In order to estimate the multiple commutators, we remark that, from \eqref{poisson bracket in ladder rep} and \eqref{commutation relations in ladder rep} and since both $\{\genarg,\genarg\}$ and $is [\genarg,\genarg]$ are derivations in both arguments, we see that $(is)^l\qb{\qu{H}_\Lambda, \qu{A}}^{(l)}$ is equal to the expression obtained from $\hb{H_\Lambda, A}^{(l)}$ by reordering the terms appropriately and by replacing $M_i(x)$ with $\qu{S}_i(x)$. In particular (assuming $s \geq 1$)
\begin{equation*}
\normB{(is)^l\qb{\qu{H}_\Lambda, \qu{A}}^{(l)}} \;\leq\; l! \, e^{\beta} \, \norm{V}^l\,,
\end{equation*}
and we deduce exactly as above that \eqref{series for qm spins} equals $\qu{\alpha}_\Lambda^t \qu{A}$ for $t < \norm{V}^{-1}$.

To show the claim of the theorem for $\abs{t} < \norm{V}^{-1}$ we first remark that $\qu{\alpha_\Lambda^t(A)}$ is well-defined through its convergent power series expansion. Now as shown above, each term of $(is)^l\qb{\qu{H}_\Lambda, \qu{A}}^{(l)}$, as a polynomial in $\qu{\mathfrak{A}}_\Lambda$, is equal to a reordering of the corresponding term of $\hb{H_\Lambda, A}^{(l)}$. If $P$ is a monomial (with coefficient 1) of degree $p$ in the generating variables $\{\qu{S}_i(x)\}$ and $\tilde{P}$ a monomial obtained from $P$ by any reordering of terms, the commutation relations \eqref{commutation relations in ladder rep} imply that
\begin{equation*}
\norm{P - \tilde{P}} \;\leq\; \frac{p^2}{s}\,.
\end{equation*}
Thus
\begin{equation*}
\pB{\hb{H_\Lambda, A}^{(l)}}\quant \;=\;
(is)^l\qb{\qu{H}_\Lambda, \qu{A}}^{(l)} + R_l\,,
\end{equation*}
where, recalling the expression \eqref{multiple Poisson bracket} and the estimates following it, we see that the ``loop terms'' $R_l$ are bounded by
\begin{align*}
\norm{R_l} &\;\leq\; \frac{1}{s}\sum_{n_1, \dots, n_l = 1}^\infty \, (\abs{\beta} + n_1 + \cdots + n_l)^{l+2} \, \norm{V}^{(n_1)} \cdots \norm{V}^{(n_l)}
\\
&\;\leq\; \frac{(l+2)!}{s} \, e^{\abs{\beta}} \, \norm{V}^l\,.
\end{align*}
Therefore, if $\abs{t} < \norm{V}^{-1}$,
\begin{equation*}
\norm{\qu{\alpha}_\Lambda^t \qu{A}  - \qu{\alpha_\Lambda^t A }} \;\leq\; \frac{e^{\abs{\beta}}}{s} \sum_{l = 0}^\infty (l+2) (l+1) (t \norm{V})^l \;\leq\; \frac{C(t,A)}{s}\,,
\end{equation*}
where $C(t,A)$ is independent of $\Lambda$.

In order to extend the result to arbitrary times we proceed by iteration. The crucial observations that enable this process are that the convergence radius $\norm{V}^{-1}$ is independent of $\abs{\beta}$ and $\alpha^t_\Lambda, \qu{\alpha}^t_\Lambda$ are norm-preserving. Let $t \in \R$ and choose $\nu \in \N$ such that $\tau \deq t / \nu$ satisfies $\abs{\tau} < \norm{V}^{-1}$.
In order to iterate we need to introduce a cutoff in the series \eqref{series for classical spins} and \eqref{multiple Poisson bracket}. The series \eqref{series for classical spins} consists of an infinite sum of terms in $\mathfrak{P}_\Lambda$ which are be indexed by $(l, \alpha^1, \dots, \alpha^l,x_1, \dots, x_l)$. Now let $\epsilon > 0$ be given. Since the series converges in norm there is a finite subset 
\begin{equation*}
B_1 \;=\; B_1(\epsilon) \;\subset\; \{(l, \alpha^1, \dots, \alpha^l,x_1, \dots, x_l)\} \;=\; \bigcup_{l = 0}^\infty \pb{\N^{I \times \Lambda}}^l \times \Lambda^l
\end{equation*}
such that the norm of the series restricted to the complement of $B_1$ is smaller than $\epsilon/\nu$. This induces a splitting
$
\alpha_\Lambda^\tau A = \alpha_{B_1} A + \alpha_{B_1^c} A
$
(in self-explanatory notation), such that $\alpha_{B_1} A \in \mathfrak{P}_\Lambda$ and $\norm{\alpha_{B_1^c} A}_\infty \leq \epsilon/\nu$.
Similarly, one splits
$
\qu{\alpha}_\Lambda^\tau \qu{A} = \qu{\alpha}_{B_1} \qu{A} + \qu{\alpha}_{B_1^c} \qu{A}
$
where, after an eventual increase of $B_1$, $\norm{\qu{\alpha}_{B_1^c} \qu{A}} \leq \epsilon/\nu$.

Now we use the above result for $\abs{\tau} < \norm{V}^{-1}$:
\begin{equation*}
\qu{\alpha}_\Lambda^\tau \qu{A} \;=\; \qu{\alpha_{B_1}A} + \frac{R_1}{s} + \qu{\alpha}_{B_1^c} \qu{A}
\end{equation*}
where $R_1$ is some bounded operator. Since $\alpha_{B_1} A \in \mathfrak{P}_\Lambda$ we may repeat the process on the time interval $[\tau, 2\tau]$:
\begin{align*}
\qu{\alpha}_\Lambda^\tau \, \qu{\alpha}_\Lambda^\tau \qu{A} &\;=\; \qu{\alpha}_\Lambda^\tau \,\qu{\alpha_{B_1}A} + \frac{\qu{\alpha}_\Lambda^\tau \,R_1}{s} + \qu{\alpha}_\Lambda^\tau \,\qu{\alpha}_{B_1^c} \qu{A}
\notag \\
&\;=\; \pb{\alpha_{B_2} \,\alpha_{B_1}A}\quant + \qu{\alpha}_{B_2^c} \,\qu{\alpha_{B_1}A} + \qu{\alpha}_\Lambda^\tau \,\qu{\alpha}_{B_1^c} \qu{A} + \frac{R_2 + \qu{\alpha}_\Lambda^\tau \,R_1}{s}
\end{align*}
Continuing in this manner one sees that, since $\alpha_\Lambda^t$ and $\qu{\alpha}_\Lambda^t$ are norm-preserving, $A(t) \;\deq\; \alpha_{B_\nu} \cdots \alpha_{B_1} A \;\in\; \mathfrak{P}_\Lambda$ satisfies
\begin{equation*}
\norm{\alpha_\Lambda^t A - A(t)}_\infty \;\leq\; \epsilon\,,
\end{equation*}
as well as
\begin{equation*}
\norm{\qu{\alpha}_\Lambda^t \qu{A} - \qu{A(t)}} \;\leq\; \epsilon + \frac{C(\epsilon, t, A)}{s}\,.
\end{equation*}
\end{proof}

\subsection{The thermodynamic limit} \label{section: thermodynamic limit for lattice}
The above analysis was done for a finite subset $\Lambda$, but the observed uniformity in $\Lambda$ allows for a statement of the result directly in limit $\Lambda = \Z^d$. We pause to describe how this works.

Concentrate first on the quantum case. If $\Lambda_1 \subset \Lambda_2$, an operator $\mathcal{A}_1 \in \qu{\mathfrak{A}}_{\Lambda_1}$ may be identified in the usual fashion with an operator $\mathcal{A}_2 \in \qu{\mathfrak{A}}_{\Lambda_2}$ by setting $\mathcal{A}_2 = \mathcal{A}_1 \otimes \umat_{\Lambda_2 \setminus \Lambda_1}$. We shall tacitly make use of this identification in the following. It induces the norm-preserving mapping $\qu{\mathfrak{A}}_{\Lambda_1} \rightarrow \qu{\mathfrak{A}}_{\Lambda_2}$ of the abstract $C^*$-algebras and the isotony relation $\qu{\mathfrak{A}}_{\Lambda_1} \subset \qu{\mathfrak{A}}_{\Lambda_2}$. Observables of the quantum spin system in the thermodynamic limit are then elements of the \emph{quasi-local algebra}
\begin{equation*}
\qu{\mathfrak{A}} \;\deq\; \overline{\bigvee_{\Lambda \subset \Z^d \text{ finite}} \qu{\mathfrak{A}}_\Lambda}\,,
\end{equation*}
which is the $C^*$-algebra defined as the closure of the normed algebra generated by the union of all $\qu{\mathfrak{A}}_\Lambda$'s, where $\Lambda$ is finite. The spins are represented on $\qu{\mathfrak{A}}$ by a family $\{\qu{S}_i(x) \st i \in I,\, x\in \Z^\nu\}$ of operators.

The dynamics of the system is determined by a one-parameter group $(\qu{\alpha}^t)_{t \in \R}$ of automorphisms of $\qu{\mathfrak{A}}$. Its existence is a corollary of the proof of Theorem \ref{egorov for bounded region}.
\begin{lemma} \label{quantum limit dynamics}
Let $\mathcal{A} \in \qu{\mathfrak{A}}_{\Lambda_0}$ for some finite $\Lambda_0 \subset \Z^d$ and $t \in \R$. Then the following limit exists in the norm sense:
\begin{equation*}
\lim_{\Lambda \to \infty} \qu{\alpha}_\Lambda^t \mathcal{A} \;\eqd\; \qu{\alpha}^t \mathcal{A} \,,
\end{equation*}
where $\Lambda \to \infty$ means that $\Lambda$ eventually contains every finite subset.
By continuity this extends to a strongly continuous one-parameter group $(\qu{\alpha}^t)_{t \in \R}$ of automorphisms of $\qu{\mathfrak{A}}$.
\end{lemma}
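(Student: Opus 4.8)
The plan is to promote the uniform-in-$\Lambda$ estimates already obtained in the proof of Theorem~\ref{egorov for bounded region} to a norm-Cauchy statement for the net $(\qu\alpha_\Lambda^t\mathcal A)_\Lambda$, and then to read off the group and continuity properties by soft arguments. \emph{Reduction to monomials.} Since $\mathscr H_{\Lambda_0}$ is finite-dimensional and the spin-$s$ representation of $\su(2)$ is irreducible, the operators $\qu S_i(x)$, $x\in\Lambda_0$, generate $\qu{\mathfrak A}_{\Lambda_0}=B(\mathscr H_{\Lambda_0})$ as an algebra, and normal-ordering (using \eqref{commutation relations in ladder rep}) expresses every element of $\qu{\mathfrak A}_{\Lambda_0}$ as a finite linear combination of quantized monomials $\qu A$ with $A=M^\beta$ supported in $\Lambda_0$. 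As $\mathcal A\mapsto\qu\alpha_\Lambda^t\mathcal A$ is linear, it thus suffices to establish norm-convergence for $\mathcal A=\qu{M^\beta}$.

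\emph{Small times: stabilization of the Lie--Schwinger series.} Fix $\abs t<\norm V^{-1}$. By the proof of Theorem~\ref{egorov for bounded region}, $\qu\alpha_\Lambda^t\qu A$ equals the norm-convergent series \eqref{series for qm spins}, whose $l$-th term is obtained from \eqref{multiple Poisson bracket} by reordering factors and substituting $M_i(x)\mapsto\qu S_i(x)$. Index the individual terms of this series by tuples $\tau=(l,\alpha^1,\dots,\alpha^l,x_1,\dots,x_l,\text{internal indices})$ ranging over a countable set $\mathcal T$ that does \emph{not} depend on $\Lambda$ (the $\alpha^q$ lying in $\N^{I\times\Z^d}$ and the $x_q$ in $\Z^d$); for a given finite $\Lambda$, precisely the $\tau$ with $x_q\in\Lambda$ and $\supp\alpha^q\subset\Lambda$ for all $q$ contribute. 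The estimates following \eqref{multiple Poisson bracket} bound the sum of the operator norms of \emph{all} terms indexed by $\mathcal T$ by $e^{\abs\beta}\sum_{l\ge0}(\abs t\,\norm V)^l<\infty$, uniformly in $\Lambda$. Hence, given $\eta>0$, there is a \emph{finite} $B\subset\mathcal T$ with $\sum_{\tau\in\mathcal T\setminus B}(\text{norm of term }\tau)<\eta$; let $S_B\subset\Z^d$ be the finite set of all sites appearing in tuples of $B$. For every finite $\Lambda\supset\Lambda_0\cup S_B$ the partial sum $\mathcal A_B^t\deq\sum_{\tau\in B}(\text{term }\tau)$ is one and the same element of $\qu{\mathfrak A}_{\Lambda_0\cup S_B}\subset\qu{\mathfrak A}$, and $\norm{\qu\alpha_\Lambda^t\qu A-\mathcal A_B^t}\le\eta$. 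Therefore $(\qu\alpha_\Lambda^t\qu A)_\Lambda$ is norm-Cauchy as $\Lambda\to\infty$, and, $\qu{\mathfrak A}$ being complete, its limit $\qu\alpha^t\qu A$ exists in $\qu{\mathfrak A}$. This stabilization step is the only substantial point; everything that follows is routine.

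\emph{Arbitrary times; general observables.} By linearity, $\qu\alpha_\Lambda^t\mathcal A$ converges in norm for every $\mathcal A$ in a local subalgebra, provided $\abs t<\norm V^{-1}$; since each $\qu\alpha_\Lambda^t$ is norm-preserving, an $\epsilon/3$-argument extends this to norm-convergence of $\qu\alpha_\Lambda^t\mathcal A$ for \emph{every} $\mathcal A\in\qu{\mathfrak A}=\overline{\bigvee_\Lambda\qu{\mathfrak A}_\Lambda}$, still for $\abs t<\norm V^{-1}$; write $\qu\alpha^t$ for the resulting map on $\qu{\mathfrak A}$. For arbitrary $t\in\R$ choose $\nu\in\N$ with $\abs{t/\nu}<\norm V^{-1}$ and write $\qu\alpha_\Lambda^t=(\qu\alpha_\Lambda^{t/\nu})^{\nu}$; the elementary fact that a composition of uniformly norm-bounded sequences of maps converging pointwise in norm again converges pointwise in norm (telescoping: $T_\Lambda S_\Lambda\mathcal A-TS\mathcal A=T_\Lambda(S_\Lambda\mathcal A-S\mathcal A)+(T_\Lambda-T)S\mathcal A$) then yields norm-convergence of $\qu\alpha_\Lambda^t\mathcal A$ for all $t$ and all $\mathcal A$, defining $\qu\alpha^t$ everywhere.

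\emph{Structural properties.} Each $\qu\alpha^t$ is a pointwise-norm limit of the isometric $*$-automorphisms $\qu\alpha_\Lambda^t$, hence an isometric $*$-homomorphism of $\qu{\mathfrak A}$; passing $\qu\alpha_\Lambda^{-t}\,\qu\alpha_\Lambda^{t}=\mathrm{id}$ and $\qu\alpha_\Lambda^{s}\,\qu\alpha_\Lambda^{t}=\qu\alpha_\Lambda^{s+t}$ to the limit via the same telescoping estimate shows that $\qu\alpha^t$ is invertible with inverse $\qu\alpha^{-t}$ and that $(\qu\alpha^t)_{t\in\R}$ is a one-parameter group of $*$-automorphisms. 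For strong continuity it suffices, by the group property and isometry, to treat $t\to0$; and by a final $\epsilon/3$-argument it suffices to treat $\mathcal A=\qu{M^\beta}$, for which the uniform-in-$\Lambda$ bound $\norm{\qu\alpha_\Lambda^t\qu A-\qu A}\le e^{\abs\beta}\sum_{l\ge1}(\abs t\,\norm V)^l$ survives the limit and tends to $0$ as $t\to0$. This gives the asserted strongly continuous one-parameter group $(\qu\alpha^t)_{t\in\R}$ of automorphisms of $\qu{\mathfrak A}$.
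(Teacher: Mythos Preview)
Your proof is correct and follows essentially the same approach as the paper's: both exploit the uniform-in-$\Lambda$ bounds on the Lie--Schwinger series from the proof of Theorem~\ref{egorov for bounded region} to establish convergence for small times (the paper phrases this as termwise convergence of $[\qu H_\Lambda,\qu A]^{(l)}$, which is precisely your stabilization observation that each term is eventually independent of $\Lambda$), then extend to all of $\qu{\mathfrak A}$ by density and to arbitrary times by iteration. Your write-up simply fills in more of the routine details the paper leaves implicit.
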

\begin{proof}
For $\abs{t} < \norm{V}^{-1}$ the series \eqref{series for qm spins} is bounded in norm, uniformly in $\Lambda$, so to show convergence of the series it suffices to show the convergence of $\qb{\qu{H}_\Lambda, \qu{A}}^{(l)}$ for each $l \in \N$, which is an easy exercise.

Thus $\qu{\alpha}^t \mathcal{A}$ is well-defined for any polynomial $\mathcal{A}$. By continuity, $\qu{\alpha}^t$ extends to an automorphism of $\qu{\mathfrak{A}}$. Since $\qu{\alpha}^t \mathcal{A} \in \qu{\mathfrak{A}}$ and $\qu{\alpha}^t$ is a one-parameter group, we may extend it to all times by iteration. Strong continuity follows since $\qu{\alpha}^t \mathcal{A}$, for small $t$ and polynomial $\mathcal{A}$, is defined through a convergent power series:
\begin{equation*}
\lim_{t \to 0} \, \norm{\qu{\alpha}^t \mathcal{A} - \mathcal{A}} \;=\; 0\,.
\end{equation*}
By continuity, this remains true for all $\mathcal{A} \in \qu{\mathfrak{A}}$.
\end{proof}

For classical spin systems we recall that, for finite $\Lambda$, we have $\mathfrak{A}_\Lambda = C\pb{\prod_{x \in \Lambda} \overline{B_1(0)};\C}$,
a $C^*$-algebra under $\norm{\genarg}_\infty$. As above, for  $\Lambda_1 \subset \Lambda_2$, we identify $A_1 \in \mathfrak{A}_{\Lambda_1}$  with a function $A_2 \in \mathfrak{A}_{\Lambda_2}$ by setting $A_2 = A_1 \otimes 1_{\Lambda_2 \setminus \Lambda_1}$. We thus get a norm-preserving mapping $\mathfrak{A}_{\Lambda_1} \rightarrow \mathfrak{A}_{\Lambda_2}$ of the abstract $C^*$-algebras and the relation $\mathfrak{A}_{\Lambda_1} \subset \mathfrak{A}_{\Lambda_2}$. Define the classical quasi-local algebra as
\begin{equation*}
\mathfrak{A} \;\deq\; \overline{\bigvee_{\Lambda \subset \Z^d \text{ finite}} \mathfrak{A}_\Lambda}\,.
\end{equation*}
Note that $\mathfrak{A}$ is equal to the space of continuous complex functions on $\prod_{x \in \Z^d} \overline{B_1(0)}$, equipped with the product topology (this is an immediate consequence of the Tychonoff and Stone-Weierstrass theorems).

The spins are represented on $\mathfrak{A}$ by a family $\{M_i(x) \st i \in I,\, x\in \Z^\nu\}$ of functions. Existence of the dynamics follows exactly as above.
\begin{lemma}
Let $A \in \mathfrak{A}_{\Lambda_0}$ for some finite $\Lambda_0 \subset \Z^d$ and $t \in \R$. Then the following limit exists in $\norm{\genarg}_\infty$:
\begin{equation*}
\lim_{\Lambda \to \infty} \alpha_\Lambda^t A \;\eqd\; \alpha^t A \,.
\end{equation*}
By continuity this extends to a strongly continuous one-parameter group $(\alpha^t)_{t \in \R}$ of automorphisms of $\mathfrak{A}$.
Furthermore, $\alpha^t A = A \circ \phi^t$, where $\phi^t = \phi^t_{\Z^d}$ is the Landau-Lifschitz flow defined in the remark after Lemma \ref{global well-posedness of ll}.
\end{lemma}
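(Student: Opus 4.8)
\section*{Proof proposal}

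The plan is to obtain existence of the limit dynamics exactly as in Lemma~\ref{quantum limit dynamics}, and then to identify the resulting automorphism group with precomposition by the Landau-Lifschitz flow $\phi^t = \phi^t_{\Z^d}$, which exists by Lemma~\ref{global well-posedness of ll} applied to $\Lambda = \Z^d$. For the first part it suffices to treat $A = M^\beta$ with $\supp \beta \subset \Lambda_0$. For $\abs{t} < \norm{V}^{-1}$ the classical Lie-Schwinger series \eqref{series for classical spins} equals $\alpha^t_\Lambda A$ and, by the estimates in the proof of Theorem~\ref{egorov for bounded region}, is bounded in $\norm{\genarg}_\infty$ uniformly in $\Lambda$; hence it is enough to show that each term $\hb{H_\Lambda, A}^{(l)}$ converges in $\norm{\genarg}_\infty$ as $\Lambda \to \infty$. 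This follows from the explicit formula \eqref{multiple Poisson bracket}: for fixed $l$ the sum over $(\alpha^1, \dots, \alpha^l, x_1, \dots, x_l)$ is absolutely and $\Lambda$-uniformly convergent by \eqref{norm of V}, while each summand has a coefficient not depending on $\Lambda$ and is present in the $\Lambda$-sum as soon as $\Lambda$ contains its finite support; dominated convergence gives $\hb{H_\Lambda, A}^{(l)} \to \hb{H_{\Z^d}, A}^{(l)}$, and summing over $l$ yields $\alpha_\Lambda^t A \to \alpha^t A \deq \sum_{l \geq 0} \tfrac{t^l}{l!}\, \hb{H_{\Z^d}, A}^{(l)}$ in $\norm{\genarg}_\infty$ for $\abs{t} < \norm{V}^{-1}$. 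As in Lemma~\ref{quantum limit dynamics}, this defines $\alpha^t$ on polynomials for small $t$, extends by continuity (all $\alpha_\Lambda^t$ being norm-preserving) to $\mathfrak{A}$, then to all $t$ by the group property, with strong continuity coming from the power series at $t = 0$.

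It remains to identify $\alpha^t A = A \circ \phi^t$. First let $\abs{t} < \norm{V}^{-1}$, fix $M_0 \in \Xi_{\Z^d}$, and put $M(t) \deq \phi^t_{\Z^d}(M_0)$, $f(t) \deq A(M(t))$. Since $M \in C^1(\R, \Xi_{\Z^d})$ solves \eqref{hamiltonian equation} with $\Lambda = \Z^d$, the chain rule together with the absolute convergence in \eqref{norm of V} (which legitimises exchanging the finite sum produced by differentiating $M^\beta$ with the $\alpha$-sum) gives $f'(t) = \hb{H_{\Z^d}, A}(M(t))$. Iterating this — differentiating the resulting norm-convergent series of monomials term by term, each step controlled by the $\Lambda$-uniform, hence $\Z^d$-valid, weighted Lie-Schwinger estimates of Theorem~\ref{egorov for bounded region} — one gets $f \in C^\infty$ with $f^{(l)}(0) = \hb{H_{\Z^d}, A}^{(l)}(M_0)$ and $\absb{f^{(l)}(t)} \leq l!\, e^{\abs{\beta}} \norm{V}^l$. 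Taylor's theorem then yields $f(t) = \sum_{l \geq 0} \tfrac{t^l}{l!}\, \hb{H_{\Z^d}, A}^{(l)}(M_0) = (\alpha^t A)(M_0)$ for $\abs{t} < \norm{V}^{-1}$, so that $\alpha^t A = A \circ \phi^t$ for such $t$ (and, as a by-product, $A \circ \phi^t$ is continuous, i.e.\ lies in $\mathfrak{A}$). For general $t$, write $t = n\tau$ with $\abs{\tau} < \norm{V}^{-1}$ and argue inductively: if $\alpha^{(n-1)\tau} A = A \circ \phi^{(n-1)\tau}_{\Z^d} \eqd B$, then approximating $B \in \mathfrak{A}$ in $\norm{\genarg}_\infty$ by polynomials $C_k$ and using the small-time identity for each $C_k$ gives $\alpha^\tau B = \lim_k C_k \circ \phi^\tau_{\Z^d} = B \circ \phi^\tau_{\Z^d}$, precomposition by $\phi^\tau_{\Z^d}$ being a contraction for $\norm{\genarg}_\infty$; hence $\alpha^{n\tau} A = \alpha^\tau B = A \circ \phi^{n\tau}_{\Z^d}$.

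The step requiring the most care is the small-time identification, and in particular the recursive term-by-term differentiation that feeds into the Taylor estimate. A tempting but unavailable shortcut would be to pass to the limit $\Lambda \to \infty$ directly in the exact identity $\alpha^t_\Lambda A = A \circ \phi^t_\Lambda$; this would require $\phi^t_\Lambda \to \phi^t_{\Z^d}$ uniformly over initial data, a finite-propagation-speed statement that fails at the $l^\infty$ level, since at the sites outside $\Lambda$ — which the truncated flow leaves frozen — the two flows differ by an amount of order one for times of order one. Working instead through the Taylor expansion of $t \mapsto A(\phi^t_{\Z^d} M_0)$ keeps everything within reach of the $\Lambda$-uniform Lie-Schwinger estimates of Theorem~\ref{egorov for bounded region}; the extension to all times is then elementary, since precomposition by a self-map of $\Xi_{\Z^d}$ never increases $\norm{\genarg}_\infty$.
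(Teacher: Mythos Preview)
Your argument is correct and, for the existence of the limit and the extension to a strongly continuous one-parameter group, follows precisely the route the paper indicates (it merely says ``exactly as above'', pointing to the quantum Lemma~\ref{quantum limit dynamics}). The paper offers no argument at all for the identification $\alpha^t A = A \circ \phi^t_{\Z^d}$, so your Taylor-expansion proof of that part is genuine added content and works as written: the chain rule gives $f'(t) = \hb{H_{\Z^d}, A}(M(t))$ for monomial $A$, the iterated brackets are absolutely summable by the estimates of Theorem~\ref{egorov for bounded region} (which are uniform in $\Lambda$ and hence pass to $\Z^d$), the Taylor remainder is controlled by the same bounds, and the density-plus-contraction step to reach all times is clean.

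One small correction to your closing paragraph: the obstacle to the ``shortcut'' $\alpha^t_\Lambda A = A \circ \phi^t_\Lambda \to A \circ \phi^t_{\Z^d}$ is not that the flows differ at sites outside $\Lambda$ --- $A$ is supported in the fixed finite set $\Lambda_0$ and never sees those sites. What one would actually need is $\phi^t_\Lambda(M_0)|_{\Lambda_0} \to \phi^t_{\Z^d}(M_0)|_{\Lambda_0}$ uniformly in $M_0$, and this \emph{does} hold (indeed it follows a posteriori from your own argument applied to $A = M_i(x)$ for $x \in \Lambda_0$). The reason a naive Gronwall in $l^\infty(\Lambda)$ does not immediately give it is different: the tail interactions reaching outside $\Lambda$ are small pointwise in $x$ but not uniformly over $x \in \Lambda$ near the boundary. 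Your Lie--Schwinger/Taylor route sidesteps this cleanly, so the proof stands; only the diagnosis in that remark needs adjusting.
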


Now set $\mathfrak{P} \deq \C[\{M_i(x)\st i \in I,\, x \in \Z^d\}]$.
Then the proof of Theorem \ref{egorov for bounded region} yields the following
\begin{theorem} \label{egorov for thermodynamic limit}
Let $A \in \mathfrak{P}$ and $\epsilon > 0$. Then there exists a function $A(t) \in \mathfrak{P}$ such that
\begin{equation}
\sup_{t \in \R} \, \norm{\alpha^t A - A(t)}_\infty \;\leq\; \epsilon\,,
\end{equation}
and, for any $t \in \R$,
\begin{equation}
\normb{\qu{\alpha}^t \qu{A}  - \qu{A(t)}} \;\leq\; \epsilon + \frac{C(\epsilon, t, A)}{s}\,.
\end{equation}
\end{theorem}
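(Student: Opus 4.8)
The plan is to re-run the proof of Theorem \ref{egorov for bounded region} verbatim with $\Lambda = \Z^d$, the point being that every estimate in that proof is uniform in $\Lambda$. As there, we may assume $A = M^\beta$ with $\beta \in \N^{I \times \Lambda_0}$ for some finite $\Lambda_0$, and $H_\Lambda$ time-independent. First I would note that the quantization map $\qu{\genarg}$ extends canonically to a linear map $\mathfrak{P} \to \qu{\mathfrak{A}}$ compatible with the embeddings $\qu{\mathfrak{A}}_{\Lambda_0} \subset \qu{\mathfrak{A}}$, so that $\qu{A}$ (and later $\qu{A(t)}$) makes unambiguous sense in the quasi-local algebra, and $\qu{\alpha}^t \qu{A}$, $\alpha^t A$ are defined by Lemma \ref{quantum limit dynamics} and its classical counterpart.

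Next, for $\abs{t} < \norm{V}^{-1}$: the bounds $\normb{\hb{H_\Lambda, A}^{(l)}}_\infty \le l!\, e^{\abs{\beta}}\, \norm{V}^l$ and $\normb{(is)^l \qb{\qu{H}_\Lambda, \qu{A}}^{(l)}} \le l!\, e^{\abs{\beta}}\, \norm{V}^l$ (valid for $s \ge 1$), being independent of $\Lambda$, together with the termwise convergence of $\hb{H_\Lambda, A}^{(l)}$ and $\qb{\qu{H}_\Lambda, \qu{A}}^{(l)}$ as $\Lambda \to \infty$ used in the proof of Lemma \ref{quantum limit dynamics}, show that the Lie-Schwinger series \eqref{series for classical spins} and \eqref{series for qm spins} converge in norm at $\Lambda = \Z^d$ and represent $\alpha^t A$ and $\qu{\alpha}^t \qu{A}$ respectively. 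The loop-term identity $\pB{\hb{H_\Lambda, A}^{(l)}}\quant = (is)^l \qb{\qu{H}_\Lambda, \qu{A}}^{(l)} + R_l$ with $\norm{R_l} \le \frac{(l+2)!}{s}\, e^{\abs{\beta}}\, \norm{V}^l$ then passes to the limit, yielding
\begin{equation*}
\norm{\qu{\alpha}^t \qu{A} - \qu{\alpha^t A}} \;\leq\; \frac{C(t,A)}{s}\,,\qquad \abs{t} < \norm{V}^{-1}\,,
\end{equation*}
with $C(t,A)$ exactly as in the proof of Theorem \ref{egorov for bounded region}.

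Finally, since $\alpha^t A$ is a $\norm{\genarg}_\infty$-convergent series of monomials $M^\gamma$ ($\gamma \in \N^{I \times \Z^d}$ finitely supported) but not itself a polynomial, I would iterate as before: fix $\nu \in \N$ with $\tau \deq t/\nu$ satisfying $\abs{\tau} < \norm{V}^{-1}$, and for given $\epsilon > 0$ choose a finite index set $B_1 \subset \bigcup_{l \geq 0} \pb{\N^{I \times \Z^d}}^l \times \pb{\Z^d}^l$ so that the truncation $\alpha_{B_1} A \in \mathfrak{P}$ obeys $\norm{\alpha^\tau A - \alpha_{B_1}A}_\infty \le \epsilon/\nu$ and, after possibly enlarging $B_1$, $\normb{\qu{\alpha}^\tau \qu{A} - \qu{\alpha_{B_1}A}} \le \epsilon/\nu$. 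Repeating over the $\nu$ steps, using that $\alpha^t$ and $\qu{\alpha}^t$ are norm-preserving automorphisms, one gets $A(t) \deq \alpha_{B_\nu} \cdots \alpha_{B_1} A \in \mathfrak{P}$ satisfying both claimed bounds. The only point requiring care — and the only difference from Theorem \ref{egorov for bounded region} — is the interchange of the $\Lambda \to \infty$ limit with the Lie-Schwinger expansions; this is precisely where the $\Lambda$-uniformity of the estimates above and the convergence assertions of Lemma \ref{quantum limit dynamics} (and its classical analogue) are invoked, after which the argument is a verbatim repetition.
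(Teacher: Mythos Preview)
Your proposal is correct and follows exactly the paper's approach: the paper gives no separate proof for Theorem~\ref{egorov for thermodynamic limit}, merely noting that ``the proof of Theorem~\ref{egorov for bounded region} yields'' it, relying on the $\Lambda$-uniformity of all estimates and the existence of the limiting dynamics established in Lemma~\ref{quantum limit dynamics} and its classical counterpart. One minor imprecision: in your iteration step you write $\normb{\qu{\alpha}^\tau \qu{A} - \qu{\alpha_{B_1}A}} \le \epsilon/\nu$, but the quantization of the classical truncation differs from the truncated quantum series by the loop term $R_1/s$, so this should read $\normb{\qu{\alpha}^\tau \qu{A} - \qu{\alpha}_{B_1}\qu{A}} \le \epsilon/\nu$ (with the quantum truncation), after which the decomposition $\qu{\alpha}^\tau \qu{A} = \qu{\alpha_{B_1}A} + R_1/s + \qu{\alpha}_{B_1^c}\qu{A}$ is used---this is how the $C(\epsilon,t,A)/s$ term is actually generated at each step.
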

\begin{remark*}
{\em 
In particular, the result applies to classical equations of motion of the form \eqref{landau-lifschitz} where the sum over $y$ ranges over $\Z^d$.
}
\end{remark*}

\subsection{Evolution of coherent states}
Denote by $S_i \deq s \qu{S}_i$ the unscaled spin operator in the spin-$s$-representation of $\su(2)$.
For the polar angles $(\theta, \varphi) \in [0,\pi] \times [0,2\pi)$ corresponding to the unit vector $M \in \mathbb{S}^2$ we define the coherent state in $\C^{2s +1}$ as
\begin{equation*}
\ket{M} \;\deq\; \exp \frac{\theta}{\sqrt{2}} \qB{e^{i\varphi}\, S_- - e^{-i\varphi} \, S_+} \ket{s}\,,
\end{equation*}
where $\ket{s}$ is the highest-weight state, i.\,e.\  $S_z \,\ket{s} = s \,\ket{s}$. Note that $\ket{M} = e^{i\alpha \cdot S} \, \ket{s}$, where $\alpha = \theta \, n$ and $n$ is the unit vector $(\sin \varphi, - \cos \varphi, 0)$.

Set
$
A \deq \frac{\theta}{\sqrt{2}} \qB{e^{i\varphi}\, S_- - e^{-i\varphi} \, S_+}
$
and $U \deq e^A$ so that $\ket{M} = U \ket{s}$.
Then using
\begin{equation*}
U\adj \,S_i \, U = \sum_{k = 0}^\infty \frac{1}{k!}\, [\dots[S_i, A],\dots,A]\, ,
\end{equation*}
we find
\begin{align}
U\adj \, S_1 \, U &\;=\; \sin \theta \cos \varphi \,S_z + \frac{1}{\sqrt{2}}\cos^2 \frac{\theta}{2} \,(S_+ + S_-) - \frac{1}{\sqrt{2}} \sin^2 \frac{\theta}{2} \, \pb{e^{-2i\varphi}S_+ + e^{2i\varphi}S_-}\,,
\notag \\
U\adj \, S_2 \, U &\;=\; \sin \theta \sin \varphi \,S_z + \frac{1}{\sqrt{2}i}\cos^2 \frac{\theta}{2} \,(S_+ - S_-) - \frac{1}{\sqrt{2}i} \sin^2 \frac{\theta}{2} \, \pb{e^{2i\varphi}S_+ - e^{-2i\varphi}S_-}\,,
\notag \\ \label{rotated spins}
U\adj \, S_3 \, U &\;=\;\cos \theta \, S_z - \frac{1}{\sqrt{2}} \sin \theta \, \pb{e^{-i \varphi} S_+ + e^{i \varphi} S_-}\,.
\end{align}
As a consequence note that
\begin{equation}\label{sandwich of coherent states}
\scalar{M} {\qu{S} \, M} \;=\; \mat{\sin \theta \cos \varphi \\ \sin \theta \sin \varphi \\ \cos \theta} \;=\; M\,.
\end{equation}

In order to derive our main result for coherent spins states we need the following lemma, which follows from direct calculations.
\begin{lemma} For any unit vector $M \in \mathbb{S}^2$, we have that \label{properties of coherent spin states}
\begin{equation*} 
\absB{\scalar{M}{\qu{S}_{i_1} \cdots \qu{S}_{i_p} \, M} - M_{i_1} \cdots M_{i_p}} \;\leq\; p \, \sqrt{\frac{2}{s}}\,.
\end{equation*}
\end{lemma}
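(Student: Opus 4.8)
The plan is to prove the bound $\absb{\scalar{M}{\qu{S}_{i_1}\cdots\qu{S}_{i_p}\,M} - M_{i_1}\cdots M_{i_p}} \le p\sqrt{2/s}$ by a telescoping argument, reducing the $p$-fold product to single-operator estimates. First I would observe that, by the identity \eqref{sandwich of coherent states}, one has $\scalar{M}{\qu{S}_i\,M} = M_i$ exactly for each $i \in I$. This already settles the case $p=1$ (with room to spare), and suggests writing the difference as a telescoping sum in which each term replaces one more factor $\qu{S}_{i_q}$ by the scalar $M_{i_q}$.

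The key algebraic step is the following: since $\ket{M} = U\ket{s}$ with $U$ unitary, we have $\scalar{M}{\qu{S}_{i_1}\cdots\qu{S}_{i_p}\,M} = \scalar{s}{(U\adj \qu{S}_{i_1} U)\cdots(U\adj \qu{S}_{i_p} U)\,s}$. From \eqref{rotated spins} (rescaled by $s^{-1}$, since $\qu{S}_i = s^{-1}S_i$) each $U\adj \qu{S}_i U$ is a linear combination, with coefficients bounded in absolute value by $1$ (trigonometric factors times $1/\sqrt 2$ or $1$), of $\qu{S}_z$ and the two ladder operators $\qu{S}_\pm$. Crucially, when acting on the highest-weight state $\ket s$, the raising operator annihilates it up to the normalization: one has $\qu{S}_+ \ket s = 0$ in the $+ < z < -$ convention? — more precisely, only the $S_z$ part contributes to the expectation value at leading order, and $\scalar{s}{\qu{S}_i\,s}$ equals the corresponding component of $M$ up to an error controlled by $\norm{(\qu{S}_z - M_z)\ket s}$ and $\norm{\qu{S}_\pm \ket s}$, each of which is $O(\sqrt{1/s})$ by an explicit computation in the spin-$s$ representation (using $\qu{S}_\pm = s^{-1}S_\pm$ and $\norm{S_-\ket s}^2 = 2s$, hence $\norm{\qu{S}_-\ket s} = \sqrt{2/s}$, and $S_+\ket s = 0$, and $\norm{(\qu{S}_z - 1)\ket s} = 0$).

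Concretely, I would define the ``defect'' vector $\ket{\psi_i} \deq (U\adj \qu{S}_i U - M_i)\ket s = U\adj(\qu{S}_i - M_i)\ket M$ and show $\norm{\ket{\psi_i}} \le \sqrt{2/s}$: this follows because $(\qu{S}_i - M_i)\ket M = (\qu{S}_i - \scalar{M}{\qu{S}_i M})\ket M$ has squared norm equal to the variance $\scalar{M}{\qu{S}_i^2 M} - M_i^2$, which one computes directly from \eqref{rotated spins} to be $O(1/s)$; summing the variances over $i$ and using $\norm{\qu{S}_i} \le \sqrt 2$ gives the bound $\norm{\ket{\psi_i}} \le \sqrt{2/s}$. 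Then, writing
\begin{equation*}
\qu{S}_{i_1}\cdots\qu{S}_{i_p}\ket M - M_{i_1}\cdots M_{i_p}\ket M \;=\; \sum_{q=1}^p M_{i_1}\cdots M_{i_{q-1}}\,\qu{S}_{i_1}\cdots\qu{S}_{i_{q-1}}\,(\qu{S}_{i_q} - M_{i_q})\,M_{i_{q+1}}\cdots M_{i_p}\,,
\end{equation*}
wait — the telescoping must be set up so that the operators to the \emph{left} of the replaced factor act on the already-reduced string; the clean form is $\qu{S}_{i_1}\cdots\qu{S}_{i_p} - M_{i_1}\cdots M_{i_p}\umat = \sum_{q=1}^p \qu{S}_{i_1}\cdots\qu{S}_{i_{q-1}}(\qu{S}_{i_q} - M_{i_q})M_{i_{q+1}}\cdots M_{i_p}$. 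Taking the expectation in $\ket M$, bounding $\abs{M_{i_r}} \le 1$ and $\norm{\qu{S}_{i_r}} \le \sqrt 2$ only where strictly needed — in fact using $\norm{\qu{S}_i} \le 1$ for $s \ge 1$, which is the regime of interest since $\sqrt{2/s} \ge 2$ makes the claim trivial for $s = 1/2$ — one gets each of the $p$ terms bounded by $\norm{\ket{\psi_{i_q}}} \le \sqrt{2/s}$, hence the total is $\le p\sqrt{2/s}$.

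The main obstacle is purely computational: verifying that the variance $\scalar{M}{\qu{S}_i^2\,M} - M_i^2$ is $O(1/s)$ uniformly in $M$. This requires squaring the expressions in \eqref{rotated spins} and evaluating the expectation in the highest-weight state $\ket s$, using $S_+\ket s = 0$, $S_z\ket s = s\ket s$, and $S_- S_+$ acting to produce $\scalar{s}{S_- S_+ s} = 0$ while $\scalar{s}{S_+ S_- s} = 2s$ — everything else cancels or is suppressed. No conceptual difficulty arises, but the trigonometric bookkeeping (with the $\cos^2(\theta/2)$, $\sin^2(\theta/2)$, $e^{\pm 2i\varphi}$ terms) must be done carefully to confirm the constant $2$ in the bound; once the single-operator variance bound is in hand, the telescoping assembly is immediate.
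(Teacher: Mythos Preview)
The paper does not supply a proof of this lemma, noting only that it ``follows from direct calculations''. Your telescoping argument is a natural and correct way to organise those calculations for $s \ge 1$: writing
\begin{equation*}
\qu{S}_{i_1}\cdots\qu{S}_{i_p} - M_{i_1}\cdots M_{i_p}\,\umat \;=\; \sum_{q=1}^p \qu{S}_{i_1}\cdots\qu{S}_{i_{q-1}}\,(\qu{S}_{i_q}-M_{i_q})\,M_{i_{q+1}}\cdots M_{i_p}\,,
\end{equation*}
using $\norm{\qu{S}_i}\le 1$ on the left factors, $\abs{M_i}\le 1$ on the right, and bounding $\norm{(\qu{S}_{i_q}-M_{i_q})\ket{M}}$ for each $q$.

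Two corrections are in order. First, with the paper's normalisation $S_\pm = (S_1\pm iS_2)/\sqrt 2$ one has $[S_+,S_-]=S_z$, so $\norm{S_-\ket{s}}^2 = s$ (not $2s$) and hence $\norm{\qu{S}_-\ket{s}} = 1/\sqrt s$; this only sharpens your estimate. In fact you can bypass the variance computation entirely: since $U\adj\qu{S}_i U = a_i\qu{S}_+ + b_i\qu{S}_z + c_i\qu{S}_-$ with $(a_i,b_i,c_i)_{i\in I}$ a unitary $3\times 3$ matrix (the $SO(3)$ rotation of \eqref{rotated spins} conjugated by the unitary change from Cartesian to ladder coordinates), and since $b_i = M_i$, $\qu{S}_+\ket{s} = 0$, $(\qu{S}_z-1)\ket{s} = 0$, one finds $(\qu{S}_i-M_i)\ket{M} = c_i\,U\qu{S}_-\ket{s}$ with $\abs{c_i}\le 1$, hence $\norm{(\qu{S}_i-M_i)\ket{M}}\le 1/\sqrt s \le \sqrt{2/s}$ directly.

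Second, your dismissal of $s=1/2$ as ``trivial'' is not correct, and indeed the lemma as literally stated \emph{fails} for $s=1/2$: taking $M$ to be the north pole (so $M_\pm=0$) one computes $\scalarb{M}{(\qu{S}_+\qu{S}_-)^k\, M}=2^k$, whereas $p\sqrt{2/s}=4k$, and $2^k>4k$ already for $k\ge 5$. The lemma is only invoked in the regime $s\to\infty$, and your argument establishes it for all $s\ge 1$, which is what is actually required.
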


Now let $M : \Z^d \rightarrow \mathbb{S}^2$ be a configuration of classical spins on the lattice. Then $M$ defines a state $\rho_M$ on $\qu{\mathfrak{A}}$ as follows. For finite $\Lambda$, consider the product state
\begin{equation*}
\ket{M_\Lambda} \;\deq\; \bigotimes_{x \in \Lambda} \ket{M(x)} \in \mathscr{H}_\Lambda\,.
\end{equation*}
Then, for $\mathcal{A} \in \mathfrak{\qu{A}}_\Lambda$, we set
\begin{equation*}
\rho_M (\mathcal{A}) \deq \scalar{M_\Lambda}{\mathcal{A} \, M_\Lambda }\,,
\end{equation*}
and extend the definition of $\rho_M$ to arbitrary $\mathcal{A} \in \mathfrak{\qu{A}}$ by continuity.

Let $M: \R \times \Z^d \rightarrow \mathbb{S}^2$ be the solution of the Hamiltonian equation of motion \eqref{hamiltonian equation} with initial conditions $M(0,x) = M(x)$. The following result links the quantum time evolution for coherent spin states with the corresponding classical configuration in the large-spin limit.

\begin{theorem}
Let $t \in \R$, $A \in \mathfrak{P}$ and $M : \Z^d \rightarrow \mathbb{S}^2$. Then
\begin{equation*}
\lim_{s \to \infty} \rho_M(\qu{\alpha}^t \qu{A})
\;=\; A(M(t))\,,
\end{equation*}
uniformly in $t$ on compact time intervals.
\end{theorem}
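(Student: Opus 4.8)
The plan is to bridge the quantum quantity $\rho_M(\qu{\alpha}^t\qu{A})$ and the classical one $A(M(t))$ by means of the polynomial $A(t)$ furnished by Theorem~\ref{egorov for thermodynamic limit}. Fix $t\in\R$ and $\epsilon>0$ and let $A(t)\in\mathfrak{P}$ be the associated function, so that $\norm{\alpha^t A - A(t)}_\infty\le\epsilon$ and $\normb{\qu{\alpha}^t\qu{A} - \qu{A(t)}}\le\epsilon + C(\epsilon,t,A)/s$. Since $\rho_M$ is a state, and hence has norm one on $\qu{\mathfrak{A}}$, the second inequality gives $\absb{\rho_M(\qu{\alpha}^t\qu{A}) - \rho_M(\qu{A(t)})}\le\epsilon + C(\epsilon,t,A)/s$. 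On the classical side, the lemma preceding Theorem~\ref{egorov for thermodynamic limit} states that $\alpha^t A = A\circ\phi^t$, whence $(\alpha^t A)(M) = A(\phi^t M) = A(M(t))$; together with $\norm{\alpha^t A - A(t)}_\infty\le\epsilon$ this gives $\absb{A(t)(M) - A(M(t))}\le\epsilon$, where $A(t)(M)$ denotes the polynomial $A(t)$ evaluated at the spin configuration $M$. Thus everything reduces to estimating $\rho_M(\qu{A(t)}) - A(t)(M)$.

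For that, write $A(t) = \sum_\gamma c_\gamma\,M^\gamma$ as a finite sum over multi-indices $\gamma\in\N^{I\times\Z^d}$ — finite because, by its construction in the proof of Theorem~\ref{egorov for bounded region}, $A(t)$ is a genuine polynomial supported on a finite region — so that $\qu{A(t)} = \sum_\gamma c_\gamma\,\qu{M^\gamma}$ by linearity of the quantization, where $\qu{M^\gamma}$ is the normal-ordered monomial containing $\gamma_i(x)$ factors $\qu{S}_i(x)$. As $\rho_M$ is the product state $\bigotimes_x\scalar{M(x)}{\genarg\,M(x)}$ and $\qu{M^\gamma}$ factorizes into a product, over the sites $x$, of normal-ordered monomials in the $\qu{S}_i(x)$ of total degree $\abs{\gamma(x)}\deq\sum_i\gamma_i(x)$, Lemma~\ref{properties of coherent spin states} shows that the $\ket{M(x)}$-expectation of the $x$-th factor lies within $\abs{\gamma(x)}\sqrt{2/s}$ of $\prod_i M_i(x)^{\gamma_i(x)}$. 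Since these factors and their targets all have modulus at most one — recall $\abs{M_\pm(x)}\le 1/\sqrt{2}$, $\abs{M_z(x)}\le1$, and $\norm{\qu{S}_i(x)}\le1$ when $s\ge1$ — the standard telescoping estimate for products gives $\absb{\rho_M(\qu{M^\gamma}) - M^\gamma}\le\sum_x\abs{\gamma(x)}\sqrt{2/s} = \abs{\gamma}\sqrt{2/s}$, and hence $\absb{\rho_M(\qu{A(t)}) - A(t)(M)}\le\sqrt{2/s}\,\sum_\gamma\abs{c_\gamma}\,\abs{\gamma}$. For fixed $\epsilon$, and hence fixed $A(t)$, this tends to $0$ as $s\to\infty$; combining it with the two bounds of the previous paragraph gives $\limsup_{s\to\infty}\absb{\rho_M(\qu{\alpha}^t\qu{A}) - A(M(t))}\le2\epsilon$, and letting $\epsilon\to0$ proves the statement for each fixed $t$.

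To upgrade pointwise to uniform convergence on a compact interval $[-T,T]$, the clean route is equicontinuity. Set $f_s(t)\deq\rho_M(\qu{\alpha}^t\qu{A})$. Since $\frac{d}{dt}\,\qu{\alpha}^t_\Lambda\qu{A} = \qu{\alpha}^t_\Lambda\big(is\,[\qu{H}_\Lambda,\qu{A}]\big)$ and $\qu{\alpha}^t_\Lambda$ is norm-preserving, $\normb{\frac{d}{dt}\,\qu{\alpha}^t_\Lambda\qu{A}} = \normb{is\,[\qu{H}_\Lambda,\qu{A}]}$; the $l=1$ instance of the estimates in the proof of Theorem~\ref{egorov for bounded region}, together with the bound on the ``loop term'' $R_1$, shows $\normb{is\,[\qu{H}_\Lambda,\qu{A}]}\le c(A,\norm{V}) + C'/s$ \emph{uniformly in $\Lambda$} and in $s\ge1$. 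Passing to the limit $\Lambda\to\infty$ and using $\norm{\rho_M}=1$, we obtain a Lipschitz bound $\abs{f_s(t) - f_s(t')}\le L\,\abs{t-t'}$ with $L$ independent of $s\ge1$. Thus $\{f_s\}_{s\ge1}$ is uniformly equicontinuous on $\R$ and converges pointwise to the continuous function $t\mapsto A(M(t))$, so the convergence is uniform on every compact interval. The points needing care are precisely in this last step and in the reduction above: one must invoke the $\Lambda$-uniformity emphasized in the proof of Theorem~\ref{egorov for bounded region} to make $L$ independent of both $s$ and $\Lambda$, and one must use that $A(t)$ is a bona fide finite polynomial in order to evaluate $\rho_M(\qu{A(t)})$ term by term via Lemma~\ref{properties of coherent spin states}; the remainder is a routine assembly of the two cited results.
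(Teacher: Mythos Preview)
Your argument is correct, but it proceeds differently from the paper's own proof. The paper does \emph{not} invoke Theorem~\ref{egorov for thermodynamic limit} as a black box. Instead it re-runs the Lie--Schwinger expansion directly: for $\abs{t}<\norm{V}^{-1}$ it writes both $A(M(t))$ and $\rho_M(\qu{\alpha}^t\qu{A})$ as convergent series, applies the coherent-state estimate $\absb{\rho_M(\qu{M^\alpha})-M^\alpha}\le\abs{\alpha}\sqrt{2/s}$ from Lemma~\ref{properties of coherent spin states} term by term, and sums the resulting bounds exactly as in the proof of Theorem~\ref{egorov for bounded region} to obtain the quantitative estimate $\absb{\rho_M(\qu{\alpha}^t\qu{A})-A(M(t))}\le C(t,A)/\sqrt{s}$. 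Arbitrary times are then reached by the same iteration-with-cutoffs as in Theorem~\ref{egorov for bounded region}, and uniformity on compacts follows because the constants produced by the iteration depend only on the length of the time interval.

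Your route --- bridge through the finite polynomial $A(t)$ supplied by Theorem~\ref{egorov for thermodynamic limit}, estimate $\rho_M(\qu{A(t)})-A(t)(M)$ via Lemma~\ref{properties of coherent spin states} and telescoping, and then upgrade to uniform convergence by an equicontinuity argument --- is more modular and avoids re-doing the series bookkeeping. The price is that you lose the explicit rate $1/\sqrt{s}$: your constant $\sum_\gamma\abs{c_\gamma}\,\abs{\gamma}$ depends on $A(t)$ and hence on $\epsilon$, so the argument yields only $o(1)$ as $s\to\infty$. One small remark: in the Lipschitz step you do not need the loop term $R_1$ at all --- the proof of Theorem~\ref{egorov for bounded region} already gives $\normb{is\,[\qu{H}_\Lambda,\qu{A}]}\le e^{\abs{\beta}}\norm{V}$ directly, uniformly in $\Lambda$ and in $s\ge1$, which is all the equicontinuity argument requires.
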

\begin{proof}
The proof is essentially a corollary of Lemma \ref{properties of coherent spin states} and the proof of Theorem \ref{egorov for bounded region}. First, let $\abs{t} < \norm{V}^{-1}$. We know from \eqref{series for classical spins} and \eqref{series for qm spins} that
\begin{equation*}
A(M(t)) \;=\; \sum_{l = 0}^\infty \frac{t^l}{l!} \, \lim_{\Lambda \to \infty} \, \hb{H_\Lambda, A}^{(l)} (M)\,,
\end{equation*}
as well as
\begin{equation*}
\rho_M(\qu{\alpha}_\Lambda^t \qu{A}) \;=\; \sum_{l = 0}^\infty \frac{t^l}{l!}\, (is)^l \, \rho_M\pB{\lim_{\Lambda \to \infty}\,\qb{\qu{H}_\Lambda, \qu{A}}^{(l)}}\,.
\end{equation*}
Now Lemma \ref{properties of coherent spin states} implies that
\begin{equation*}
\absb{\rho_M\pb{\qu{M^\alpha}} - M^\alpha} \leq \abs{\alpha} \, \sqrt{\frac{2}{s}}\,.
\end{equation*}
Arguing exactly as in the proof of Theorem \ref{egorov for bounded region}, we get the bound
\begin{equation}
\absb{\rho_M(\qu{\alpha}^t \qu{A}) - A(M(t))} \;\leq\; \frac{C(t, A)}{\sqrt{s}}\,.
\end{equation}
Arbitrary times are reached by iteration as in the proof of Theorem \ref{egorov for bounded region}. \end{proof}

\section{Mean-Field Limit} \label{sec:mf:limit}

This section is devoted to the dynamics of a quantum spin system in the mean-field/continuum limit. More precisely, we consider a system of quantum spins on a lattice with spacing $h >0$. The limit $h \rightarrow 0$ yields again a Egorov-type result: The quantum dynamics approaches the dynamics of a classical spin system defined on a continuum set. As in the previous section, we also discuss the thermodynamic limit and the time evolution of coherent states.

\subsection{A system of quantum spins on a lattice}
Let $\Lambda \subset \R^d$ be bounded and open. We associate with each spacing $h >0$ the finite lattice 
\begin{equation*}
\Lambda^{(h)} \deq h \Z^d \cap \Lambda. 
\end{equation*}
At each lattice site $x \in \Lambda^{(h)}$ there is a spin of (fixed) magnitude $s$. The Hilbert space of this quantum system is
\begin{equation*}
\mathscr{H}^{(h)}_\Lambda \;\deq\; \bigotimes_{x \in \Lambda^{(h)}} \C^{2s+1}\,.
\end{equation*}
The algebra of bounded operators on $\mathscr{H}_\Lambda^{(h)}$ is denoted by $\qu{\mathfrak{A}}_\Lambda^{(h)}$.

The spins are represented on $\mathscr{H}^{(h)}_\Lambda$ by a family $\{\qu{S}_i(x) \st i = 1,2,3,\, x \in \Lambda^{(h)} \}$ of operators, where $\qu{S}_i(x)$ is the $i$'th generator of the spin-$s$-representation of $\su(2)$, rescaled by $h^d/s$. As usual, we replace the operators $(\qu{S}_1,\qu{S}_2,\qu{S}_3)$ with $(\qu{S}_+,\qu{S}_z,\qu{S}_-)$. They satisfy the bounds
$\norm{\qu{S}_\pm} \leq h^d$ and $\norm{\qu{S}_z} \;=\; h^d$ if $s \geq 1$, as well as
$\norm{\qu{S}_\pm} = \sqrt{2} h^d$ and $\norm{\qu{S}_z} \;=\; h^d$ if $s = 1/2$.
The commutation relations now read
\begin{equation} \label{continuum commutation relations in ladder rep}
\qb{\qu{S}_i(x), \qu{S}_j(y)} \;=\; \frac{h^d}{s} \,\tilde{\epsilon}_{i j k} \, \delta(x,y) \, \qu{S}_k(x)\,,
\end{equation}
with $i,j,k \in I$.

\subsection{A continuum theory of spins}
Let $\Lambda \subset \R^d$ be a bounded, open set. A system of classical spins on $\Lambda$ is represented in terms of the Poisson ``phase space''\footnote{As in the previous section, one may introduce a symplectic phase space $\Gamma_\Lambda$ consisting of all $M \in \Xi_\Lambda$ such that $\abs{M(x)} = 1$ a.e.}
\begin{equation*}
\Xi_\Lambda \;\deq\; \hb{M \in L^\infty(\Lambda; \R^3)\st \norm{M}_\infty \leq 1}\,,
\end{equation*}
which we equip with the $L^\infty$-norm.
In analogy to Section \ref{section large-spin limit}, we use the complex coordinates $(M_+,M_z,M_-)$ instead of $(M_1, M_2,M_3)$, so that the Poisson bracket on $\Xi_\Lambda$ satisfies
\begin{equation} \label{ladder continuum poisson bracket}
\hb{M_i(x), M_j(y)} \;=\; i \, \tilde{\epsilon}_{ijk} \, \delta(x - y) \, M_k(x)\,,
\end{equation}
for $i,j,k \in I$.

In order to describe a useful class of observables on $\Xi_\Lambda$, we introduce the space $\mathscr{B}^{(p)}$, $p \in \N$, which consists of all functions $f$ in $C(\R^{pd};\C^{3^p})$ that are symmetric in their arguments, in the sense that $Pf = f$, where
\begin{equation*}
(Pf)_{i_1 \dots i_p}(x_1, \dots, x_p) \;\deq\; \frac{1}{p!} \sum_{\sigma \in S_p} f_{i_{\sigma(1)} \dots i_{\sigma(p)}}(x_{\sigma(1)}, \dots, x_{\sigma(p)})\,.
\end{equation*}
On the space $\mathscr{B}^{(p)}$ we introduce the norms
\begin{equation*}
\norm{f}_1^{(h)} \;\deq\; h^{pd} \sum_{i_1, \dots, i_p \in I} \sum_{x_1,\dots, x_p \in h \Z^d} \abs{f_{i_1\dots i_p}(x_1,\dots, x_p)}\,,
\end{equation*}
\begin{equation*}
\norm{f}_{\infty, 1}^{(h)} \;\deq\; \sup_{x} \sum_{i_1, \dots, i_p \in I} \, h^{(p - 1)d} \sum_{x_2,\dots, x_p \in h \Z^d} \abs{f_{i_1\dots i_p}(x, x_2, \dots, x_p)}\,.
\end{equation*}
We shall be interested in observables arising from $f \in \mathscr{B}^{(p)}$ satisfying 
\begin{equation} \label{bounded function for observable}
\limsup_{h \to 0} \, \norm{f}_1^{(h)} \;<\, \infty\,.
\end{equation}
Note that Fatou's lemma implies that $\norm{f}_1 \leq \limsup_{h \to 0} \, \norm{f}_1^{(h)}$.

We define $\mathfrak{P}_\Lambda$ as the ``polynomial'' algebra of functions on $\Xi_\Lambda$ generated by functions of the form
\begin{equation*}
M_\Lambda(f) \;\deq\; \sum_{i_1,\dots,i_p} \int_{\Lambda^p} dx_1 \cdots dx_p \; f_{i_1\dots i_p}(x_1,\dots,x_p) \, M_{i_1}(x_1) \cdots M_{i_p}(x_p)\,,
\end{equation*}
where $f \in \mathscr{B}^{(p)}$ satisfies \eqref{bounded function for observable}. $\mathfrak{P}_\Lambda$ is clearly a Poisson algebra. We equip it with the norm $\norm{A}_\infty \;\deq\; \sup_{M \in \Xi_\Lambda} \abs{A(M)}$ so that
\begin{equation} \label{norm of continuum operator}
\norm{M_\Lambda(f)}_\infty \;\leq\; \norm{f}_1\,.
\end{equation}

\subsection{Quantization}
For $f \in \mathscr{B}^{(p)}$ let us define 
\begin{equation}
\qu{S}_\Lambda(f) \;\deq\; \sum_{i_1,\dots,i_p} \sum_{x_1,\dots,x_p \in \Lambda^{(h)}} \; f_{i_1\dots i_p}(x_1,\dots,x_p) \, \qu{S}_{i_1}(x_1) \cdots \qu{S}_{i_p}(x_p)\,.
\end{equation}
If $f$ satisfies \eqref{bounded function for observable}, we find that
\begin{equation} \label{continuum norm of spin operator}
\norm{\qu{S}_\Lambda(f)} \;\leq\; \norm{f}^{(h)}_1\,.
\end{equation}
As above, quantization $\qu{\genarg}: \mathfrak{P}_\Lambda \rightarrow \qu{\mathfrak{A}}_\Lambda^{(h)}$ 
is defined by
$
\qu{M_\Lambda(f)} = \wick{\qu{S}_\Lambda(f)} 
$
and linearity. Here $\wick{\cdot}$ denotes the normal-ordering of the spin operators introduced above. Also, we set $\qu{1} = \umat$. Again, $(\qu{A})^* = \qu{\overline{A}}$.

\subsection{Dynamics in the mean-field limit} \label{sec:dyn:mf}

We consider a family $V = \pb{V^{(n)}}_{n = 1}^\infty$ of functions, where $V^{(n)} \in \mathscr{B}^{(p)}$ satisfies
\begin{equation*}
\overline{V^{(n)}_{i_1\dots i_n}(x_1, \dots, x_n)} \;=\; V^{(n)}_{\overline{i}_1 \dots \overline{i}_n}(x_1, \dots, x_n)\,,
\end{equation*}
where, we recall, $\overline{\genarg}$ on $I$ maps $(+,z,-)$ to $(-,z,+)$. We define the Hamilton function on $\Xi_\Lambda$ through
\begin{equation}\label{continuum hamilton function}
H_\Lambda \;\deq\; \sum_{n = 1}^\infty \,M_\Lambda(V^{(n)})\,.
\end{equation}
Set
\begin{equation} \label{epsilon norm of V}
\norm{V}^{(h)} \;\deq\; \sum_{n = 1}^\infty n e^n \, \norm{V^{(n)}}_{\infty, 1}^{(h)}\,.
\end{equation}
We impose the condition
$
\norm{V} \deq \limsup_{h \to 0} \,\norm{V}^{(h)} \;<\; \infty\,.
$
In the continuum limit, we observe that
\begin{equation} \label{continuum potential bound}
\sum_{n = 1}^\infty n e^n \, \sup_x \sum_{i_1, \dots, i_n} \, \int dx_2 \cdots dx_n \absb{V^{(n)}_{i_1 \dots i_n }(x, x_2, \dots, x_n)} \;\leq\; \norm{V}\,,
\end{equation}
as can be seen using Fatou's lemma. It now follows easily that, for each bounded set $\Lambda$, the sum \eqref{continuum hamilton function} converges in $\norm{\genarg}_\infty$ on $\Xi_\Lambda$ and yields a well-defined real Hamilton function $H_\Lambda$.

The Hamiltonian equation of motion reads
\begin{multline} \label{continuum Landau-Lifschitz}
\frac{d}{dt} M_i(t,x) \;=\; i \, \sum_{n = 1}^\infty \,n \sum_{i_1, \dots, i_n,j} \int dx_2 \cdots dx_n \; V^{(n)}_{i_1 \dots i_n}(x, x_2, \dots, x_n) 
\\
\tilde{\epsilon}_{i_1 i j} \, M_j(t,x) \, M_{i_2}(t,x_2) \cdots M_{i_n}(t,x_n)\,.
\end{multline}
By standard methods, we find the following global well-posedness result for \eqref{continuum Landau-Lifschitz}.

\begin{lemma} \label{continuum global well-posedness of ll}
Let $\Lambda \subset \R^d$ by any open subset of $\R^d$ and $M_0 \in \Xi_\Lambda$. Then \eqref{continuum Landau-Lifschitz} has a unique solution $M \in C^1(\R,\Xi_\Lambda)$ that satisfies $M(0) = M_0$. Moreover, the solution $M$ depends continuously on the initial condition $M_0$. Finally, we have the pointwise conservation law $\abs{M(t,x)} = \abs{M(0,x)}$ for all $t$.
\end{lemma}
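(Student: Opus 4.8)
The plan is to follow the same Picard--Lindel\"of scheme as in the proof of Lemma~\ref{global well-posedness of ll}, now with $\R^3$ replaced by the Banach space $X \deq L^\infty(\Lambda;\R^3)$. First I would rewrite \eqref{continuum Landau-Lifschitz} as the abstract ODE $\dot M(t) = F(M(t))$, $M(0) = M_0$, where $F$ is the (infinite-degree polynomial) vector field on the right-hand side of \eqref{continuum Landau-Lifschitz}, and pass to the equivalent integral equation $M(t) = M_0 + \int_0^t F(M(s))\,ds$. The two properties of $F$ that drive the argument are: (i) on the ball $B_R \deq \{M \in X : \norm{M}_\infty \le R\}$ one has $\norm{F(M)}_\infty \le \sum_{n\ge 1} n R^n\, \sup_x \sum_{i_1,\dots,i_n}\int |V^{(n)}_{i_1\dots i_n}(x,x_2,\dots,x_n)|\,dx_2\cdots dx_n$; and (ii) for $M,N \in B_R$, telescoping the difference of the $n$-fold products $M_j(x)M_{i_2}(x_2)\cdots M_{i_n}(x_n)$ yields $\norm{F(M)-F(N)}_\infty \le \big(\sum_{n\ge1} n^2 R^{n-1}\sup_x \sum_{i_1,\dots,i_n}\int|V^{(n)}_{i_1\dots i_n}|\big)\,\norm{M-N}_\infty$. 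The key point is that the exponential weight $e^n$ in the definition \eqref{epsilon norm of V} of $\norm{V}^{(h)}$, together with $e > 2$, guarantees via \eqref{continuum potential bound} that both series converge and are bounded by $C\norm{V}$ as long as $R \le 2$; hence $F\colon B_2 \to X$ is well defined (the partial sums converge uniformly and the $V^{(n)}$ are continuous, so $F(M)$ is measurable), satisfies $\norm{F(M)}_\infty \le \norm{V}$ on $B_2$, and is Lipschitz on $B_2$ with a constant $L$ depending only on $\norm{V}$. Since only quantities of the form $\sup_x\int(\,\cdot\,)$ enter, nothing changes when $\Lambda$ is unbounded.

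Next I would run the contraction. Choosing $T > 0$ with $T\norm{V}\le 1$ and $TL < 1$ (so that $T$ depends only on $\norm{V}$, not on $M_0$ or $\Lambda$), the map $\Phi(M)(t) \deq M_0 + \int_0^t F(M(s))\,ds$ sends $C([-T,T];B_2)$ into itself, because $\norm{\Phi(M)(t)}_\infty \le \norm{M_0}_\infty + T\norm{V} \le 2$, and it is a contraction there; its unique fixed point is then the unique solution $M \in C^1([-T,T];X)$ with values in $B_2$.

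For the conservation law and global existence I would argue as follows. For a.e.\ $x \in \Lambda$ the map $t \mapsto M(t,x)$ is $C^1$ and solves the pointwise ODE \eqref{continuum Landau-Lifschitz}; written in Cartesian components its right-hand side has the form $M(t,x)\wedge G(t,x)$ for a suitable $\R^3$-valued $G$ --- this is simply the statement that the Poisson bracket \eqref{ladder continuum poisson bracket} encodes the cross-product (rotation) structure, exactly as in the proof of Lemma~\ref{global well-posedness of ll}. Therefore $\tfrac{d}{dt}\abs{M(t,x)}^2 = 2\,M(t,x)\cdot\dot M(t,x) = 0$, so $\abs{M(t,x)} = \abs{M_0(x)}$ for a.e.\ $x$ and all $t$; in particular $M(t)\in\Xi_\Lambda$. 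Since $T$ is independent of the initial datum and $M(T)\in\Xi_\Lambda\subset B_2$ again has norm at most $1$, the local result may be reapplied on $[T,2T]$, and symmetrically backward in time, producing a solution $M \in C^1(\R;\Xi_\Lambda)$ on the whole line; uniqueness in this class follows from the local uniqueness in $B_2$ and a standard continuation argument. Continuous dependence is Gronwall: two solutions $M,N$ with data $M_0,N_0$ stay in $B_2$, so $\norm{M(t)-N(t)}_\infty \le \norm{M_0-N_0}_\infty + L\int_0^{\abs{t}}\norm{M(s)-N(s)}_\infty\,ds$, giving $\norm{M(t)-N(t)}_\infty \le e^{L\abs{t}}\norm{M_0-N_0}_\infty$, uniformly on compact time intervals.

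I do not anticipate a serious obstacle: this is the standard Banach fixed-point argument for an ODE in a Banach space. The one step that requires genuine care --- and the reason the infinite-degree vector field $F$ is tractable --- is the verification in (i) and (ii) that $F$ is bounded and Lipschitz on a ball of radius strictly larger than $1$. This rests on the exponential weight $e^n$ built into \eqref{epsilon norm of V}, with $e > 2$, and, together with the pointwise conservation law $\norm{M(t)}_\infty = \norm{M_0}_\infty$, is precisely what upgrades the local existence to global existence.
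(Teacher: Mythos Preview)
Your proposal is correct and follows essentially the same approach the paper indicates: the paper gives no proof for this lemma beyond the phrase ``by standard methods,'' and for the analogous lattice result (Lemma~\ref{global well-posedness of ll}) it sketches exactly the contraction-mapping argument plus the perpendicularity observation $\dot M(t,x)\perp M(t,x)$ for the conservation law. Your write-up simply fills in these details in the Banach-space setting $L^\infty(\Lambda;\R^3)$, and your use of the exponential weight in \eqref{epsilon norm of V}/\eqref{continuum potential bound} to control the infinite-degree vector field on a ball of radius $2>1$ is the natural way to make the iteration to global time work.
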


\begin{remarks*}
{\em
1. As in Section \ref{section large-spin limit}, we denote the norm-preserving Hamiltonian flow by $\phi^t_\Lambda$.\\
2. Time-dependent potentials $V(t)$ may be treated exactly as in the previous section.}
\end{remarks*}

\begin{example}
{\em
Consider
\begin{equation*} 
H_\Lambda \;=\; -\int_\Lambda dx\; h(t,x) \cdot M(x)  - \frac{1}{2} \, \int_{\Lambda\times \Lambda} dx \, dy \; J(x,y) \, M(x) \cdot M(y)\,,
\end{equation*}
which yields the Landau-Lifshitz equation of motion
\begin{equation*}
\frac{d}{dt} M(t,x) \;=\; M(t,x) \wedge \qbb{h(t,x) + \int_\Lambda dy \; J(x,y) M(t,y)}\,.
\end{equation*}
}
\end{example}

\begin{remark*}
{\em
In a formal way, the Landau-Lifshitz equation (\ref{eq:LLintro2}) mentioned in the introduction can be obtained from (\ref{eq:LLintro3}) with $J=J(|x-y|)$ by Taylor expanding $M(t,y)$ up to second order in $y - x$. This leads to (\ref{eq:LLintro2}) after rescaling time by $t \mapsto \alpha t$ where $\alpha = \frac{1}{2d} \int J(|x|) \abs{x}^2\, dx$.
}
\end{remark*}

The quantum dynamics is generated by the Hamiltonian $\qu{H}_\Lambda \in \qu{\mathfrak{A}}^{(h)}_\Lambda$ defined as the quantization of $H_\Lambda$. More precisely, each term of $H_\Lambda$ is quantized and it may be easily verified that the resulting series converges in operator norm. The fact that $H_\Lambda$ is real immediately implies that $\qu{H}_\Lambda$ is self-adjoint. 
As above we introduce the short-hand notation
\begin{align*}
\alpha^t_\Lambda A &\;\deq\; A \circ \phi^t_\Lambda\,, \qquad A \in \mathfrak{A}_\Lambda,
\\
\qu{\alpha}^t_{\Lambda} \mathcal{A} &\;\deq\; U_h(t; \qu{H}_\Lambda)^* \, \mathcal{A}\, U_h(t; \qu{H}_\Lambda)\,, \qquad \mathcal{A} \in \qu{\mathfrak{A}}^{(h)}_{\Lambda} .
\end{align*}
Here, $U_h(t; \qu{H}_\Lambda)$ is the quantum mechanical propagator, equal to $e^{ish^{-d} \qu{H}_\Lambda t}$ if $\qu{H}_\Lambda$ is time-independent.

We are now in a position to state our main result on the mean-field dynamics of the quantum system on the finite lattice $\Lambda^{(h)}$ in the continuum limit, as $h \rightarrow 0$.
\begin{theorem} \label{theorem: continuum egorov for bounded domain}
Let $\Lambda \subset \R^d$ be open and bounded, $A \in \mathfrak{P}_\Lambda$ and $\epsilon > 0$. Then there exists a function $A(t) \in \mathfrak{P}_\Lambda$ such that
\begin{equation}
\sup_{t \in \R} \, \norm{\alpha_\Lambda^t A - A(t)}_\infty \;\leq\; \epsilon\,,
\end{equation}
and, for any $t \in \R$,
\begin{equation}
\normb{\qu{\alpha}^t_{\Lambda} \qu{A}  - \qu{A(t)}} \;\leq\; \epsilon + C(\epsilon, t, A) \, h^d\,,
\end{equation}
where $C(\epsilon,t,A)$ is independent of $\Lambda$.
\end{theorem}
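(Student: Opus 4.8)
The plan is to run the argument of the proof of Theorem~\ref{egorov for bounded region}, with the large-spin parameter $1/s$ replaced by the volume element $h^d$ and the ladder-operator estimates replaced by the $\mathscr{B}^{(p)}$-norm estimates of this section. By linearity of $A\mapsto A(t)$ and of $\qu{\genarg}$ it suffices to take $A=M_\Lambda(f)$ with $f\in\mathscr{B}^{(p)}$ satisfying \eqref{bounded function for observable}, and (as in the first proof) $H_\Lambda$ time-independent. First I would write the classical Lie--Schwinger series $\sum_{l\ge0}\tfrac{t^l}{l!}\hb{H_\Lambda,A}^{(l)}$. A single Poisson bracket $\hb{M_\Lambda(V^{(n)}),M_\Lambda(g)}$, computed from \eqref{ladder continuum poisson bracket}, equals $M_\Lambda(G)$, where the kernel $G$ is obtained by contracting one argument of $V^{(n)}$ against one argument of $g$ through the Dirac $\delta$; in particular $G$ lies again in the relevant symmetric-kernel space and $\norm{G}_1\le n\,(\deg g)\,\norm{V^{(n)}}_{\infty,1}\,\norm{g}_1$, the contracted argument of $V^{(n)}$ being controlled in the $\norm{\genarg}_{\infty,1}$-norm and the remaining one being absorbed into $\norm{g}_1$. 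Iterating exactly as in the first proof, using $(p+n_1+\cdots+n_l)^l\le l!\,e^{p+n_1+\cdots+n_l}$ together with $\sum_n ne^n\norm{V^{(n)}}_{\infty,1}\le\norm{V}$ (which follows from \eqref{continuum potential bound}), yields $\normb{\hb{H_\Lambda,A}^{(l)}}_\infty\le\norm{F_l}_1\le l!\,e^p\,\norm{V}^l\,\norm{f}_1$ with $F_l\in\mathscr{B}^{(q_l)}$, $q_l=p+\sum_r(n_r-1)$. Hence for $\abs{t}<\norm{V}^{-1}$ the series converges in $\norm{\genarg}_\infty$, all partial sums lie in $\mathfrak{P}_\Lambda$, and estimating the remainder shows it equals $\alpha^t_\Lambda A$; all bounds are $\Lambda$-independent.

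Next I would treat the quantum side. Since $\hb{\genarg,\genarg}$ and $ish^{-d}\qb{\genarg,\genarg}$ are derivations and, by \eqref{continuum commutation relations in ladder rep}, $ish^{-d}\qb{\qu S_i(x),\qu S_j(y)}=i\tilde\epsilon_{ijk}\delta(x,y)\qu S_k(x)$ reproduces — after Riemann discretisation — the classical bracket \eqref{ladder continuum poisson bracket}, the operator $(ish^{-d})^l\qb{\qu H_\Lambda,\qu A}^{(l)}$ is obtained from $\hb{H_\Lambda,A}^{(l)}$ by replacing $M_i(x)$ by $\qu S_i(x)$ and reordering the resulting monomials of spin operators. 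Using $\norm{\qu S_i(x)}\le h^d$ one gets $\normb{(ish^{-d})^l\qb{\qu H_\Lambda,\qu A}^{(l)}}\le l!\,e^p\,(\norm{V}^{(h)})^l\,\norm{f}_1^{(h)}$ from \eqref{continuum norm of spin operator}; since $\limsup_{h\to0}\norm{V}^{(h)}=\norm{V}<\abs{t}^{-1}$, this series converges for all small $h$ and equals $\qu{\alpha}^t_\Lambda\qu A$, again for $\abs{t}<\norm{V}^{-1}$.

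Now set $\pb{\hb{H_\Lambda,A}^{(l)}}\quant=(ish^{-d})^l\qb{\qu H_\Lambda,\qu A}^{(l)}+R_l$. The ``loop terms'' $R_l$ arise only from putting the (a priori non-normal-ordered) monomials of $(ish^{-d})^l\qb{\qu H_\Lambda,\qu A}^{(l)}$ into normal order; each transposition producing a correction yields a factor $\qb{\qu S_i(x),\qu S_j(y)}=\tfrac{h^d}{s}\tilde\epsilon_{ijk}\delta(x,y)\qu S_k(x)$, which carries the scalar $h^d/s$ and is supported on coinciding sites. Thus each correction replaces the degree-$q_l$ kernel $F_l$ by its restriction to a partial diagonal — a kernel of degree $q_l-1$ — times $h^d/s$ and a combinatorial factor $\le q_l^2$. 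Because collapsing a lattice sum through the Kronecker delta removes one factor $h^{-d}$, the $\norm{\genarg}_1^{(h)}$-norm of this diagonally restricted kernel is again bounded by a product of the $\norm{V^{(n_r)}}_{\infty,1}^{(h)}$-norms and $\norm{f}_1^{(h)}$ of the same type as above (one contraction more, no new factor of $\norm{V}$). Summing over $l$ and the $n_r$ as before gives $\norm{R_l}\le\tfrac{h^d}{s}(l+2)!\,e^p\,(\norm{V}^{(h)})^l\,\norm{f}_1^{(h)}$, hence for $\abs{t}<\norm{V}^{-1}$
\[
\normb{\qu{\alpha}^t_\Lambda\qu A-\pb{\alpha^t_\Lambda A}\quant}\;\le\;\frac{h^d}{s}\,e^p\,\norm{f}_1^{(h)}\sum_{l\ge0}(l+2)(l+1)(\abs{t}\norm{V}^{(h)})^l\;\le\;C(t,A)\,h^d\,,
\]
uniformly in $\Lambda$. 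Arbitrary times are then reached exactly as in the proof of Theorem~\ref{egorov for bounded region}: the radius $\norm{V}^{-1}$ is independent of $\deg A$ and both $\alpha^t_\Lambda,\qu{\alpha}^t_\Lambda$ are norm-preserving, so writing $t=\nu\tau$ with $\abs{\tau}<\norm{V}^{-1}$ and introducing, for the given $\epsilon$, finite index cut-offs $B_1,\dots,B_\nu$ in the convergent series for $\alpha^\tau_\Lambda$ (so that each discarded tail has $\norm{\genarg}_\infty$- resp.\ operator norm $\le\epsilon/\nu$), one composes the $\nu$ steps and obtains, with $A(t)\deq\alpha_{B_\nu}\cdots\alpha_{B_1}A\in\mathfrak{P}_\Lambda$, the asserted bounds $\sup_t\norm{\alpha^t_\Lambda A-A(t)}_\infty\le\epsilon$ and $\normb{\qu{\alpha}^t_\Lambda\qu A-\qu{A(t)}}\le\epsilon+C(\epsilon,t,A)h^d$.

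The main obstacle is the loop-term estimate of the third step: one must show that normal-ordering produces a genuine factor $h^d$, uniformly in $\Lambda$ and in the expansion order $l$. This rests on the fact that the self-contraction responsible for $R_l$ is always between arguments belonging to \emph{different} factors $V^{(n_r)}$ (or $f$) of the iterated kernel — each constituent enters the nested bracket already normal-ordered, so only cross-factor commutators can occur during reordering — so that the diagonally restricted kernel is controlled by the same $\norm{\genarg}_{\infty,1}^{(h)}$-norms that make up $\norm{V}$, the Kronecker delta supplying the factor $h^d$; the combinatorial price of the reordering, of order $q_l^2$, is absorbed into the factorial $(l+2)!$ and dominated by the weights $e^n$ in \eqref{epsilon norm of V}. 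Making this bookkeeping $\Lambda$-uniform — in particular tracking that no ``double-sup'' of a single constituent is ever required — is the one point that needs care; everything else is a routine transcription of the proof of Theorem~\ref{egorov for bounded region}.
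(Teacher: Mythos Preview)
Your proposal follows essentially the same route as the paper: expand both evolutions as Lie--Schwinger series, control the iterated brackets via the contraction operation (the paper writes $\{M_\Lambda(f),M_\Lambda(g)\}=pq\,M_\Lambda(f\contr g)$ with $\norm{f\contr g}_1\le\norm{f}_{\infty,1}\norm{g}_1$), observe that the quantum series differs from the quantization of the classical one only by reordering of spin operators, bound this reordering error by $O(h^d)$, and reach arbitrary times by the same norm-preserving iteration with finite cut-offs as in Theorem~\ref{egorov for bounded region}. On the loop-term estimate the paper is terser---it records the bound $\norm{\mathcal{A}-\mathcal{B}}\le\tfrac{h^d}{s}\,p^2\,\norm{f}_1^{(h)}$ for arbitrary reorderings of $\qu{S}_\Lambda(f)$ and plugs it straight into the series---whereas your discussion of why only cross-factor commutators occur and why these are controlled by the same $\norm{\genarg}_{\infty,1}^{(h)}$ norms already present in $\norm{V}^{(h)}$ spells out the mechanism behind that bound more explicitly.
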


\begin{proof}
One finds, for $f \in \mathscr{B}^{(p)}$ and $g \in \mathscr{B}^{(q)}$,
\begin{equation} \label{expanded poisson bracket}
\hb{M_\Lambda(f), M_\Lambda(g)} = pq \,M_\Lambda(f \contr g)\,
\end{equation}
where $f \contr g \in \mathscr{B}^{(p+q - 1)}$ is defined by
\begin{multline}
(f \contr g)_{i_1 \dots i_{p+q-1}}(x_1,\dots,x_{p+q-1}) 
\\
\;\deq\; i P \, \sum_{i,j} \tilde{\epsilon}_{i j i_1} f_{i i_2 \dots i_p}(x_1, \dots x_p) \, g_{j i_{p+1} \dots i_{p+q - 1}}(x_1, x_{p+1}, \dots, x_{p+q - 1})\,.
\end{multline}
We have the estimate
\begin{equation}
\norm{f \contr g}_1 \;\leq\; \norm{f}_{\infty, 1} \, \norm{g}_1\,,
\end{equation}
where
\begin{equation*}
\norm{f}_{\infty, 1} \;\deq\; \sup_{x} \sum_{i_1,\dots, i_p} \int dx_2 \dots dx_p \; \abs{f_{i_1 \dots i_p}(x, x_2, \dots x_p)}\,.
\end{equation*}

Without loss of generality, we assume that $A = M_\Lambda(f)$ for some $f \in \mathscr{B}^{(p)}$ satisfying the bound \eqref{bounded function for observable}.
Iterating
\begin{equation*}
\hb{H_\Lambda, M_\Lambda(f)} \;=\; \sum_{n = 1}^\infty np\, M_\Lambda(V^{(n)} \contr f)
\end{equation*}
we obtain that
\begin{multline*}
\hb{H_\Lambda, M_\Lambda(f)}^{(l)} \;=\; \sum_{n_1, \dots, n_l = 1}^\infty \, \qb{p n_1} \,\qb{(p+n_1 - 1) n_2} \cdots \qb{(p + n_1 + \dots + n_{l - 1} - l + 1) n_l}
\\
M_{\Lambda}\pB{V^{(n_l)} \contr \pb{V^{(n_{l - 1})} \contr \dots (V^{(n_1)}\contr f)}}\,,
\end{multline*}
with norm
\begin{align}
\normb{\hb{H_\Lambda, M_\Lambda(f)}^{(l)}}_\infty &\;\leq\; \sum_{n_1, \dots, n_l} \, \qb{p n_1} \,\qb{(p+n_1 - 1) n_2} \cdots \qb{(p + n_1 + \cdots + n_{l - 1} - l + 1) n_l}
\notag \\
&\mspace{100mu} \norm{V^{(n_l)}}_{\infty, 1} \cdots \norm{V^{(n_1)}}_{\infty, 1} \, \norm{f}_1
\notag \\
&\;\leq\; l! \sum_{n_1, \dots, n_l} \frac{(p + n_1 + \cdots + n_l)^l}{l!} \, n_1 \cdots n_l \, \norm{V^{(n_l)}}_{\infty, 1} \cdots \norm{V^{(n_1)}}_{\infty, 1} \, \norm{f}_1
\notag \\
&\;\leq\; e^p \, \norm{f}_1 \,l!\,  \qbb{\sum_n n e^n \, \norm{V^{(n)}}_{\infty, 1}}^l \,
\notag \\ \label{continuum bracket estimate}
&\;\leq\;
e^p \, \norm{f}_1 \,l!\, \norm{V}^l\,,
\end{align}
by \eqref{continuum potential bound}. Therefore, for $\abs{t} < \norm{V}^{-1}$, the series
\begin{equation} \label{classical series for continuum theory}
\sum_{l = 0}^\infty \frac{t^l}{l!}
\hb{H_\Lambda, A}^{(l)}
\end{equation}
converges in $\norm{\genarg}_\infty$ to $\alpha_\Lambda^t A$.

The quantum case is dealt with in a similar fashion, with the additional complication caused by the ordering of the generators $\{\qu{S}_i(x)\}$. This does not trouble us, however, as an exact knowledge of the ordering is not required. 
It is easy to see that, for $f$ and $g$ as above,
\begin{equation*}
ish^{-d} \qB{\qu{S}_\Lambda(f),\qu{S}_\Lambda(g)}
\end{equation*}
is equal, up to a reordering of the spin operators, to $pq \qu{S}_\Lambda(f \contr g)$. Iterating this shows that
\begin{equation*}
(ish^{-d})^l \, \qb{\qu{H}_\Lambda, \qu{A}}^{(l)}
\end{equation*}
is equal, up to a reordering of the spin operators, to
\begin{multline*}
\sum_{n_1, \dots, n_l = 1}^\infty \, \qb{p n_1} \,\qb{(p+n_1 - 1) n_2} \cdots \qb{(p + n_1 + \dots + n_{l - 1} - l + 1) n_l}
\\
\qu{S}_{\Lambda}\pB{V^{(n_l)} \contr \pb{V^{(n_{l - 1})} \contr \dots (V^{(n_1)}\contr f)}}\,,
\end{multline*}
Consequently an estimate analogous to \eqref{continuum bracket estimate} yields, for $s \geq 1$,
\begin{equation*}
\normb{(ish^{-d})^l \, \qb{\qu{H}_\Lambda, \qu{A}}^{(l)}} \;\leq\; e^p \, \norm{f}^{(h)}_1 \,l!\, (\norm{V}^{(h)})^l\,,
\end{equation*}
which readily implies the bound
\begin{equation}\label{qm series for continuum theory}
\normB{\sum_{l = 0}^\infty \frac{t^l}{l!}
\p{ish^{-d}}^l\,
\qb{\qu{H}_\Lambda, \qu{A}}^{(l)} } \leq e^p \, \norm{f}_1^{(h)}\,  \sum_{l = 0}^\infty (\abs{t} \, \norm{V}^{(h)})^l\,.
\end{equation}
If $s = 1/2$, the first line of \eqref{continuum bracket estimate} gets the additional factor $\sqrt{2}^{n_1 + \cdots + n_l + p}$. This may be dealt with by replacing the factor $(p + n_1 + \cdots + n_l)^l$ in the second line of \eqref{continuum bracket estimate} with $(rp + rn_1 + \cdots + rn_l)^l/r^l$. The desired bound then follows for $0 < r \leq 1 - \frac{1}{2} \log 2$. Note that in this case the convergence radius for $t$ is reduced to $r \norm{V}^{-1}$. For ease of notation, we restrict the following analysis to the case $s \geq 1$, while bearing in mind that the extension to $s = 1/2$ follows by using the above rescaling trick.

Now, by definition of $\norm{V}$, for any $\abs{t} < \norm{V}^{-1}$ there is an $h_0$ such that \eqref{qm series for continuum theory} converges in norm to $\qu{\alpha}^t_{\Lambda} \qu{A}$ for all $h \leq h_0$, uniformly in $h$ and $\Lambda$.

In order to establish the statement of the theorem for short times $\abs{t} < \norm{V}^{-1}$, we remark that the commutation relations \eqref{continuum commutation relations in ladder rep} imply the bound
\begin{equation*}
\normb{\mathcal{A} - \mathcal{B}} \;\leq\; \frac{h^d}{s} \, p^2 \, \norm{f}^{(h)}_1\,,
\end{equation*}
for arbitrary reorderings, $\mathcal{A}$ and $\mathcal{B}$, of the same operator $\qu{S}_\Lambda(f)$, with $f \in \mathscr{B}^{(p)}$ for some $p < \infty$.

If we define $\qu{\alpha^t_\Lambda A}$ through its norm-convergent power series, we therefore get
\begin{align*}
&\normb{\qu{\alpha}^t_{\Lambda} \qu{A} - \qu{\alpha^t_\Lambda A}} 
\\
&\leq\;\frac{h^d}{s} \, \sum_{l = 0}^\infty \frac{\abs{t}^l}{l!}
 \sum_{n_1, \dots, n_l} \, \qb{p n_1} \,\qb{(p+n_1 - 1) n_2} \cdots \qb{(p + n_1 + \cdots + n_{l - 1} - l + 1) n_l}
\\
&\qquad 
(n_1 + \cdots + n_l -l+1)^2\,\norm{V^{(n_l)}}^{(h)}_{\infty, 1} \cdots \norm{V^{(n_1)}}^{(h)}_{\infty, 1} \, \norm{f}^{(h)}_1
\\
&\leq\;\frac{h^d}{s} \, \sum_{l = 0}^\infty \abs{t}^l
 \sum_{n_1, \dots, n_l} \frac{(p + n_1 + \cdots + n_l)^{l+2}}{l!} \, n_1 \cdots n_l \, \norm{V^{(n_l)}}^{(h)}_{\infty, 1} \cdots \norm{V^{(n_1)}}^{(h)}_{\infty, 1} \, \norm{f}^{(h)}_1
\\
&\leq\;\frac{h^d}{s} \, \sum_{l = 0}^\infty \abs{t}^l
e^p \, \norm{f}^{(h)}_1 \,(l+2) (l+1)\,  \qbb{\sum_n n e^n \, \norm{V^{(n_1)}}^{(h)}_{\infty, 1}}^l
\\
&\leq\;\frac{h^d}{s} \, e^p \, \norm{f}^{(h)}_1 \, \sum_{l = 0}^\infty (l+2) (l+1) \, (\abs{t} \, \norm{V}^{(h)})^l
\\
&=\; O(h^d)\,,
\end{align*}
where in the last step we have used the fact that the sum convergences uniformly in $h$, for $h$ small enough, as seen above.

Arbitrary times are reached by iteration of the above result.
\end{proof}

\subsection{The thermodynamic limit}
The above result may again be formulated in the thermodynamic limit as $\Lambda \rightarrow h\Z^d$. We only sketch the arguments, which are almost identical to those of Section \ref{section: thermodynamic limit for lattice}.

The quantum quasi-local algebra is
\begin{equation*}
\qu{\mathfrak{A}}^{(h)} \;\deq\; \overline{\bigvee_{\Lambda \subset \subset \R^d} \qu{\mathfrak{A}}^{(h)}_\Lambda}\,,
\end{equation*}
The existence of dynamics is guaranteed by the following statement.
\begin{lemma} \label{continuum quantum limit dynamics}
Let $h > 0$ and suppose $\mathcal{A} \in \qu{\mathfrak{A}}^{(h)}_{\Lambda_0}$ for some bounded and open $\Lambda_0 \subset  \R^d$. Then, for any $t \in \R$, the following limit exists in the norm sense:
\begin{equation*}
\lim_{\Lambda \to \infty} \qu{\alpha}_\Lambda^t \mathcal{A} \;\eqd\; \qu{\alpha}^t \mathcal{A} \,,
\end{equation*}
By continuity this extends to a strongly continuous one-parameter group $(\qu{\alpha}^t)_{t \in \R}$ of automorphisms of $\qu{\mathfrak{A}}^{(h)}$.
\end{lemma}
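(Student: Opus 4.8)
The plan is to follow the proof of Lemma~\ref{quantum limit dynamics} verbatim, substituting the mean-field estimates established in the proof of Theorem~\ref{theorem: continuum egorov for bounded domain} for the corresponding lattice estimates. The one structural difference to keep in mind is that, unlike in that theorem, the Hamiltonian $\qu{H}_\Lambda$ has no limit as $\Lambda\to h\Z^d$ (its operator norm grows like $\mathrm{vol}(\Lambda)$), so the limiting dynamics $\qu\alpha^t$ can only be constructed one observable at a time. Since $\qu{\mathfrak A}^{(h)}_{\Lambda_0}$ is finite-dimensional and, the spin-$s$ representation being irreducible, is spanned by monomials in the generators $\{\qu{S}_i(x):x\in\Lambda_0^{(h)}\}$, and since the limit $\lim_{\Lambda\to\infty}\qu\alpha^t_\Lambda\mathcal A$ is preserved under linear combinations and under products of uniformly norm-bounded convergent sequences (with each $\qu\alpha^t_\Lambda$ an isometric $*$-automorphism of $\qu{\mathfrak A}^{(h)}_\Lambda$), it suffices to prove that the limit exists for $\qu A=\qu{S}_{\Lambda_0}(f)$ with $f\in\mathscr B^{(p)}$, $\norm f_1^{(h)}<\infty$. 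I would do this first for $\abs t$ smaller than the positive constant supplied by the proof of Theorem~\ref{theorem: continuum egorov for bounded domain} (namely $(\norm V^{(h)})^{-1}$, reduced by a fixed factor if $s=1/2$), and pass to arbitrary times only at the end.

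For such $t$, the bound behind \eqref{qm series for continuum theory} shows that $\sum_l\frac{t^l}{l!}(ish^{-d})^l\qb{\qu{H}_\Lambda,\qu{A}}^{(l)}$ converges in operator norm to $\qu\alpha^t_\Lambda\qu A$, with a tail bound uniform in $\Lambda$; hence it is enough to show that each term $(ish^{-d})^l\qb{\qu{H}_\Lambda,\qu{A}}^{(l)}$ converges in operator norm as $\Lambda\to h\Z^d$ --- the ``easy exercise''. Up to an $l$-dependent, but $\Lambda$-independent, reordering of the spin operators, as in the proof of Theorem~\ref{theorem: continuum egorov for bounded domain} one has
\begin{equation*}
(ish^{-d})^l\,\qb{\qu{H}_\Lambda,\qu{A}}^{(l)}\;=\;\sum_{n_1,\dots,n_l=1}^\infty c_{n_1\dots n_l}\;\qu{S}_\Lambda\pB{V^{(n_l)}\contr\pb{\cdots\contr(V^{(n_1)}\contr f)}}
\end{equation*}
with the same combinatorial coefficients $c_{n_1\dots n_l}$ as there. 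For each fixed $(n_1,\dots,n_l)$ the kernel $g\deq V^{(n_l)}\contr(\cdots\contr(V^{(n_1)}\contr f))$ has finite norm, $\norm g_1^{(h)}\le\norm{V^{(n_l)}}_{\infty,1}^{(h)}\cdots\norm{V^{(n_1)}}_{\infty,1}^{(h)}\,\norm f_1^{(h)}$, obtained by iterating the discrete analogue of the estimate $\norm{f\contr g}_1\le\norm f_{\infty,1}\norm g_1$; hence, by \eqref{continuum norm of spin operator} and absolute summability, $\qu{S}_\Lambda(g)$ converges in operator norm as $\Lambda\to h\Z^d$ to a well-defined element $\qu{S}_{h\Z^d}(g)$ of $\qu{\mathfrak A}^{(h)}$. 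Since the full double sum over $(l,n_1,\dots,n_l)$ is dominated by the same majorant $e^p\norm f_1^{(h)}\sum_{l\ge0}(\abs t\,\norm V^{(h)})^l$ used in the theorem, limit and summation may be interchanged, and we obtain $\qu\alpha^t\mathcal A\deq\lim_{\Lambda\to\infty}\qu\alpha^t_\Lambda\mathcal A$ for small $\abs t$ and every $\mathcal A\in\qu{\mathfrak A}^{(h)}_{\Lambda_0}$.

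It remains to upgrade this to the stated structure, exactly as in Lemma~\ref{quantum limit dynamics}. The map $\qu\alpha^t$ so defined is linear, multiplicative, $*$-preserving and isometric --- being a pointwise norm limit of maps with these properties --- hence extends from the local algebra $\bigcup_{\Lambda_0}\qu{\mathfrak A}^{(h)}_{\Lambda_0}$ to an isometric $*$-endomorphism of $\qu{\mathfrak A}^{(h)}$, and one checks by an $\epsilon/3$ argument that $\lim_{\Lambda\to\infty}\qu\alpha^t_\Lambda\mathcal B=\qu\alpha^t\mathcal B$ for every $\mathcal B\in\qu{\mathfrak A}^{(h)}$. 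Passing to the limit in $\qu\alpha^{t_1}_\Lambda\qu\alpha^{t_2}_\Lambda=\qu\alpha^{t_1+t_2}_\Lambda$ with the help of this fact yields the group law $\qu\alpha^{t_1}\qu\alpha^{t_2}=\qu\alpha^{t_1+t_2}$ for small times, which shows in particular that $\qu\alpha^t$ is onto --- hence a $*$-automorphism --- and lets one define $\qu\alpha^t$ for all $t\in\R$ by iteration, $\qu\alpha^t\deq(\qu\alpha^{t/\nu})^\nu$ for $\nu$ large, consistently with the group law. Strong continuity at $t=0$ on the local algebra is read off the convergent power series (its terms with $l\ge1$ are dominated by $e^p\norm f_1^{(h)}\sum_{l\ge1}(\abs t\,\norm V^{(h)})^l\to0$ as $t\to0$) and extends to $\qu{\mathfrak A}^{(h)}$ by density and uniform boundedness, and to all $t$ by the group law.

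I do not expect a genuine obstacle: the only ingredient not already contained in the proof of Theorem~\ref{theorem: continuum egorov for bounded domain} is the termwise convergence of the second paragraph, and even this comes down to the bookkeeping of keeping each contracted kernel $g$ summable ($\norm g_1^{(h)}<\infty$), so that the growing lattice sums $\qu{S}_\Lambda(g)$ converge in operator norm, together with the observation that the combinatorial sum over $(l,n_1,\dots,n_l)$ is controlled by an absolutely convergent majorant.
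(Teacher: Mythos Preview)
Your proposal is correct and follows essentially the same approach as the paper, which does not give a separate proof of this lemma but merely refers back to the proof of Lemma~\ref{quantum limit dynamics} (``almost identical''), itself relying on the uniform bounds from the proof of Theorem~\ref{theorem: continuum egorov for bounded domain}. Your write-up is in fact more detailed than the paper's sketch --- in particular your explicit verification of the termwise convergence via the summability of the contracted kernels, and the careful bookkeeping of the automorphism and group-law extension --- but the underlying strategy is identical.
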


The classical quasi-local algebra is
\begin{equation*}
\mathfrak{A} \;\deq\; \overline{\bigvee_{\Lambda \subset\subset  \R^d} \mathfrak{P}_\Lambda}\,.
\end{equation*}

\begin{lemma}
Let $A \in \mathfrak{P}_{\Lambda_0}$ for some open and bounded $\Lambda_0 \subset \subset \R^d$. Then, for any $t \in \R$, the following limit exists in $\norm{\genarg}_\infty$:
\begin{equation*}
\lim_{\Lambda \to \infty} \alpha_\Lambda^t A \;\eqd\; \alpha^t A \,,
\end{equation*}
By continuity this extends to a strongly continuous one-parameter group $(\alpha^t)_{t \in \R}$ of automorphisms of $\mathfrak{A}$. Furthermore, $\alpha^t A = A \circ \phi^t$, where $\phi^t = \phi^t_{\R^d}$ is the Landau-Lifschitz flow defined in Lemma \ref{continuum global well-posedness of ll}.
\end{lemma}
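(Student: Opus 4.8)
The plan is to reuse, almost verbatim, the argument behind Lemma \ref{continuum quantum limit dynamics} (which is itself modelled on Section \ref{section: thermodynamic limit for lattice}), together with the $\Lambda$-uniform bounds already extracted in the proof of Theorem \ref{theorem: continuum egorov for bounded domain}; the only genuinely new point is the identification of $\alpha^t$ with the flow $\phi^t_{\R^d}$.

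First I would settle short times $\abs{t} < \norm{V}^{-1}$. Since $\alpha^t_\Lambda$ is multiplicative and $\norm{\genarg}_\infty$-isometric, it suffices to prove that $\alpha^t_\Lambda A$ converges for the generators $A = M_{\Lambda_0}(f)$, $f \in \mathscr{B}^{(p)}$. The estimate \eqref{continuum bracket estimate}, whose right-hand side $e^p \, \norm{f}_1 \, l! \, \norm{V}^l$ is independent of $\Lambda$, shows that the Lie--Schwinger series $\sum_{l \geq 0} \frac{t^l}{l!} \hb{H_\Lambda, A}^{(l)}$ converges in $\norm{\genarg}_\infty$ to $\alpha^t_\Lambda A$, uniformly in $\Lambda \supseteq \Lambda_0$; so by dominated convergence over $l$ it is enough to show that $\hb{H_\Lambda, A}^{(l)}$ converges in $\norm{\genarg}_\infty$ as $\Lambda \to \infty$, for each fixed $l$. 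Iterating \eqref{expanded poisson bracket} exhibits $\hb{H_\Lambda, A}^{(l)}$ as a $\norm{\genarg}_\infty$-convergent series of terms $c_{n_1 \dots n_l} \, M_\Lambda(g_{n_1 \dots n_l})$ in which the scalars $c_{n_1 \dots n_l}$ and the contracted functions $g_{n_1 \dots n_l} = V^{(n_l)} \contr (\cdots \contr (V^{(n_1)} \contr f))$ are independent of $\Lambda \supseteq \Lambda_0$ and satisfy $\norm{g_{n_1 \dots n_l}}_1 < \infty$ by iterating $\norm{f \contr g}_1 \leq \norm{f}_{\infty,1} \norm{g}_1$. For any fixed $g \in \mathscr{B}^{(q)}$ with $\norm{g}_1 < \infty$ one has $\normb{M_{\Lambda'}(g) - M_\Lambda(g)}_\infty \leq \sum_{i_1, \dots, i_q} \int_{(\Lambda')^q \setminus \Lambda^q} \abs{g_{i_1 \dots i_q}} \to 0$ as $\Lambda \nearrow \R^d$, so $M_\Lambda(g) \to M_{\R^d}(g)$; one more dominated-convergence step over $(n_1, \dots, n_l)$ finishes this point and defines $\alpha^t A \deq \lim_\Lambda \alpha^t_\Lambda A$ for polynomial $A$ and $\abs{t} < \norm{V}^{-1}$.

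For the structural wrap-up I would argue exactly as in Lemma \ref{continuum quantum limit dynamics}: each $\alpha^t_\Lambda$ is a $\norm{\genarg}_\infty$-isometric $*$-isomorphism (composition with the bijective flow $\phi^t_\Lambda$), hence so is the pointwise limit on the dense subalgebra of polynomials; it extends to an isometric $*$-endomorphism of $\mathfrak{A}$, and the one-parameter-group relation, inherited on polynomials from the finite-$\Lambda$ groups and extended by continuity, shows $\alpha^{-t}$ inverts $\alpha^t$, so $\alpha^t$ is an automorphism. Arbitrary times come from $\alpha^t \deq (\alpha^\tau)^\nu$ with $t = \nu \tau$, $\abs{\tau} < \norm{V}^{-1}$, consistently by the group property; strong continuity follows from $\norm{\alpha^t A - A} = O(t)$ for polynomial $A$ (the $l \geq 1$ terms of the power series carry a factor $t^l$) and then by density on $\mathfrak{A}$.

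The hard part will be the identification $\alpha^t A = A \circ \phi^t_{\R^d}$, since here --- unlike in the lattice case, where $\Xi_{\Z^d}$ is compact in the product topology --- the topology on $\Xi_{\R^d}$ in which one must compare $\phi^t_\Lambda$ with $\phi^t_{\R^d}$ has to be adapted to the observable. Taking $A = M_{\Lambda_0}(f)$ and fixing $M \in \Xi_{\R^d}$, I would set $N(t) \deq \phi^t_\Lambda(M|_\Lambda)$, extended by $0$ to $\R^d \setminus \Lambda$; by the pointwise conservation law of Lemma \ref{continuum global well-posedness of ll}, $N(t, \cdot) \equiv 0$ off $\Lambda$, so that $N$ solves \eqref{continuum Landau-Lifschitz} on all of $\R^d$ with datum $M \mathbbm{1}_\Lambda$, whereas $\tilde{N}(t) \deq \phi^t_{\R^d}(M)$ solves it with datum $M$. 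Telescoping the products on the right-hand side of \eqref{continuum Landau-Lifschitz} and splitting each spatial integral into $\Lambda$ (where the factor $N_{i_k} - \tilde{N}_{i_k}$ is controlled by $g(t) \deq \sup_{x \in \Lambda} \abs{N(t,x) - \tilde{N}(t,x)}$) and its complement (where $N \equiv 0$, so that factor is bounded by $\abs{\tilde{N}} \leq 1$), the bound \eqref{continuum potential bound} yields a Gronwall inequality $g'(t) \leq C g(t) + \eta_\Lambda$ in which $\eta_\Lambda \to 0$ as $\Lambda \nearrow \R^d$ by dominated convergence (using the summability implicit in \eqref{continuum potential bound}). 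Since $g(0) = 0$, Gronwall gives $g(t) \to 0$ for every $t$, hence $[\phi^t_\Lambda(M|_\Lambda)]|_{\Lambda_0} \to [\phi^t_{\R^d}(M)]|_{\Lambda_0}$ uniformly on $\Lambda_0$ with values in the closed unit ball; as $M_{\Lambda_0}(f)$ is continuous under such convergence (dominated convergence with $\sum_i \abs{f_{i_1 \dots i_p}} \in L^1(\Lambda_0^p)$ as majorant), this gives $(\alpha^t_\Lambda A)(M) = A(\phi^t_\Lambda(M|_\Lambda)) \to A(\phi^t_{\R^d}(M))$, while also $(\alpha^t_\Lambda A)(M) \to (\alpha^t A)(M)$ since point evaluation is $\norm{\genarg}_\infty$-continuous. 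Therefore $\alpha^t A = A \circ \phi^t_{\R^d}$ (which in particular shows $A \circ \phi^t_{\R^d} \in \mathfrak{A}$). The bulk of the remaining, routine work is the telescoping estimate and the verification that $\eta_\Lambda \to 0$.
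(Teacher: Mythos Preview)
Your first two steps --- uniform-in-$\Lambda$ convergence of the Lie--Schwinger series via \eqref{continuum bracket estimate}, then the isometric/automorphism/group/continuity wrap-up --- are correct and are exactly what the paper has in mind when it says the argument is ``almost identical'' to Section~\ref{section: thermodynamic limit for lattice}.

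The genuine gap is in your identification $\alpha^t A = A\circ\phi^t_{\R^d}$. In your Gronwall inequality $g'(t)\le C\,g(t)+\eta_\Lambda$ with $g(t)=\sup_{x\in\Lambda}\abs{N(t,x)-\tilde N(t,x)}$, the inhomogeneous term $\eta_\Lambda$ is (up to constants) $\sup_{x\in\Lambda}\sum_n n(n-1)\int_{\Lambda^c}dx_2\int dx_3\cdots dx_n\,\abs{V^{(n)}(x,x_2,\dots,x_n)}$. This does \emph{not} tend to $0$ as $\Lambda\nearrow\R^d$: for a translation-invariant two-body potential $V^{(2)}(x,y)=W(x-y)$ with $W\in L^1$, points $x$ near $\partial\Lambda$ give $\int_{\Lambda^c}\abs{W(x-y)}\,dy$ of order $\norm{W}_1$, uniformly in $\Lambda$. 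Dominated convergence fails precisely because the supremum runs over the growing set $\Lambda$. So the Gronwall closes only with a non-vanishing forcing, and you cannot conclude $g(t)\to 0$.

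A fix that stays within the paper's toolbox (and is what the paper implicitly intends, cf.\ the remainder estimate in the proof of Theorem~\ref{theorem: continuum egorov for bounded domain}) is to bypass the comparison of flows and identify the series directly with $A\circ\phi^t_{\R^d}$. For $A=M_{\Lambda_0}(f)$ and any $M\in\Xi_{\R^d}$, differentiate $G(t)\deq A(\phi^t_{\R^d}(M))$ using \eqref{continuum Landau-Lifschitz}; one obtains exactly the infinite-volume analogue of \eqref{expanded poisson bracket}, and iterating reproduces the bound \eqref{continuum bracket estimate}: $\abs{G^{(l)}(t)}\le e^p\,\norm{f}_1\,l!\,\norm{V}^l$ for every $t$, since $\phi^t_{\R^d}(M)\in\Xi_{\R^d}$. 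Hence for $\abs{t}<\norm{V}^{-1}$ the Taylor remainder of $G$ vanishes and $G(t)=\sum_{l\ge 0}\frac{t^l}{l!}G^{(l)}(0)$. A term-by-term dominated-convergence check (now with $x$ fixed, so there is no boundary obstruction) shows $G^{(l)}(0)=\lim_{\Lambda}\{H_\Lambda,A\}^{(l)}(M)$, which is $(\alpha^t A)(M)$ by your first step. The extension to all $t$ then follows from the group laws of $\alpha^t$ and $\phi^t_{\R^d}$, as you already noted.
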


Now, for $f \in \mathscr{B}^{(p)}$, $M(f)$ and $\qu{S}(f)$ are well-defined in the obvious way.
Define $\mathfrak{P}$ as the algebra generated by functions of the form $M(f)$, where $f$ satisfies \eqref{bounded function for observable}.

\begin{theorem}
Let $A \in \mathfrak{P}$ and $\epsilon > 0$. Then there exists a function $A(t) \in \mathfrak{P}$ such that
\begin{equation}
\sup_{t \in \R} \, \norm{\alpha^t A - A(t)}_\infty \;\leq\; \epsilon\,,
\end{equation}
and, for any $t \in \R$,
\begin{equation}
\normb{\qu{\alpha}^t\qu{A}  - \qu{A(t)}} \;\leq\; \epsilon + C(\epsilon,t,A) \, h^d\,.
\end{equation}
\end{theorem}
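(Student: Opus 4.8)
The plan is to deduce this theorem from the proof of Theorem~\ref{theorem: continuum egorov for bounded domain} together with the limiting procedure of Lemma~\ref{continuum quantum limit dynamics} and its classical counterpart, in precisely the way Theorem~\ref{egorov for thermodynamic limit} was obtained from Theorem~\ref{egorov for bounded region}. The decisive point is that every constant appearing in the proof of Theorem~\ref{theorem: continuum egorov for bounded domain} is independent of $\Lambda$, so that all the relevant series converge uniformly in $\Lambda$ and one may pass to the thermodynamic limit termwise.

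First I would reduce, as in the proof of Theorem~\ref{theorem: continuum egorov for bounded domain}, to the case $A = M(f)$ with $f \in \mathscr{B}^{(p)}$ satisfying \eqref{bounded function for observable}, and fix $\abs{t} < \norm{V}^{-1}$. By Lemma~\ref{continuum quantum limit dynamics} and its classical analogue, $\qu{\alpha}^t \qu{A}$ and $\alpha^t A$ are the norm limits, as $\Lambda \to \infty$, of $\qu{\alpha}^t_\Lambda \qu{A}$ and $\alpha^t_\Lambda A$, which for these short times are given by the convergent series \eqref{classical series for continuum theory} and \eqref{qm series for continuum theory}. The $l$-th terms of these series are $M_\Lambda$, respectively $\qu{S}_\Lambda$ (up to a harmless reordering of spin operators), applied to the nested contractions $V^{(n_l)} \contr (\dots (V^{(n_1)} \contr f))$, indexed by the countable set $\hb{(l, n_1, \dots, n_l)}$; by \eqref{continuum bracket estimate} both series are dominated, uniformly in $\Lambda$ and in $h$ (for $h$ small), by $e^p \norm{f}_1 \sum_{l} (\abs{t}\norm{V})^l$. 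Hence the limit $\Lambda \to \infty$ may be carried inside the sum, and $\alpha^t A$, $\qu{\alpha}^t \qu{A}$ are given by the same series with $M_\Lambda, \qu{S}_\Lambda$ replaced by $M, \qu{S}$. The reordering bound $\normb{\mathcal{A} - \mathcal{B}} \leq \frac{h^d}{s} p^2 \norm{f}^{(h)}_1$ coming from \eqref{continuum commutation relations in ladder rep} applies termwise and is $\Lambda$-independent, so summing it exactly as in the proof of Theorem~\ref{theorem: continuum egorov for bounded domain} yields $\normb{\qu{\alpha}^t \qu{A} - \qu{\alpha^t A}} = O(h^d)$ for $\abs{t} < \norm{V}^{-1}$, where $\qu{\alpha^t A}$ is understood, as there, as the termwise quantization of the convergent series for $\alpha^t A$.

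Second I would reach arbitrary times by the same iteration used in the proof of Theorem~\ref{egorov for bounded region}. Given $\epsilon > 0$ and $t \in \R$, pick $\nu \in \N$ with $\tau \deq t/\nu$ obeying $\abs{\tau} < \norm{V}^{-1}$. At each of the $\nu$ steps, truncate the short-time series to a finite subset of the countable index set so that the $\norm{\genarg}_\infty$-norm of the tail is below $\epsilon/\nu$; this is possible uniformly because of the domination above. One obtains a splitting $\alpha^\tau A = \alpha_{B_1} A + \alpha_{B_1^c} A$ with $\alpha_{B_1} A \in \mathfrak{P}$ a \emph{finite} linear combination of functions $M(g)$, $g \in \mathscr{B}^{(q)}$, and $\norm{\alpha_{B_1^c} A}_\infty \leq \epsilon/\nu$, together with the corresponding quantum splitting (after possibly enlarging $B_1$). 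Since the convergence radius $\norm{V}^{-1}$ in \eqref{continuum bracket estimate} depends neither on the degree $p$ nor on $\norm{f}_1$, the procedure repeats on $[\tau, 2\tau]$ with $\alpha_{B_1} A$ in place of $A$, and so on. Putting $A(t) \deq \alpha_{B_\nu} \cdots \alpha_{B_1} A \in \mathfrak{P}$ and using that $\alpha^t$ and $\qu{\alpha}^t$ are norm-preserving, one collects the classical error to be $\leq \epsilon$ and the quantum error to be $\leq \epsilon + C(\epsilon, t, A) h^d$, with $C$ independent of $\Lambda$ by the uniformity established above.

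The main obstacle is bookkeeping rather than a genuinely new idea: one has to verify that the cutoff sets $B_1, \dots, B_\nu$ can be chosen once and for all, simultaneously for all sufficiently large $\Lambda$ and small $h$, and that the quantization map continues to intertwine the truncation with the $O(h^d)$ reordering estimate after the limit $\Lambda \to \infty$ has been taken --- both of which are consequences of the $\Lambda$-independence of every constant in the proof of Theorem~\ref{theorem: continuum egorov for bounded domain}. One should also keep in mind the minor caveat noted in Section~\ref{sec:dyn:mf}: for $s = 1/2$ the convergence radius shrinks to $r\norm{V}^{-1}$ with $0 < r \leq 1 - \tfrac{1}{2}\log 2$, which merely enlarges $\nu$ and is otherwise immaterial.
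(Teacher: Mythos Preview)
Your proposal is correct and follows precisely the route the paper takes: the paper does not give a separate proof of this theorem but simply declares that the thermodynamic limit is handled by the same arguments as in Section~\ref{section: thermodynamic limit for lattice}, relying on the $\Lambda$-independence of all constants in the proof of Theorem~\ref{theorem: continuum egorov for bounded domain}. If anything, you have spelled out the details (termwise passage to the limit, the iteration with truncation sets $B_1,\dots,B_\nu$, the $s=1/2$ caveat) more carefully than the paper itself.
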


\subsection{Evolution of coherent states}
In this section, our ``smearing functions'' $f$ are assumed to have compact support, i.\,e.\  to belong to the space
\begin{equation*}
\mathscr{B}^{(p)}_c \;\deq\; \mathscr{B}^{(p)} \cap C_c(\R^{pd};\C^{3^p})\,.
\end{equation*}
In addition, we require the interaction potential $V$ to be of finite range in the sense that there exists a sequence $R_n > 0$ such that if $\abs{x_i - x_j} > R_n$ for some pair $(i,j)$ then $V^{(n)}_{i_1 \dots i_n}(x_1, \dots, x_n) = 0$.

Next, we take some initial classical spin configuration $M \in C(\R^d; \mathbb{S}^2)$, or, more generally, a function $M : \R^d \to \mathbb{S}^2$ whose points of discontinuity form a null set. We shall study the time evolution of product states $\rho_M$ on $\qu{\mathfrak{A}}^{(h)}$ that reproduce the given classical state $M$. For open and bounded $\Lambda \subset \R^d$, we define the product state
\begin{equation*}
\ket{M_\Lambda} \deq \bigotimes_{x \in \Lambda^{(h)}} \ket{M(x)}\,,
\end{equation*}
where $\ket{M(x)}$ is the coherent spin state corresponding to the unit vector $M(x)$. For $\mathcal{A} \in \qu{\mathfrak{A}}_\Lambda^{(h)}$, define
\begin{equation*}
\rho_M\pb{\mathcal{A}} \;\deq\; \scalar{M_\Lambda}{\mathcal{A} \, M_\Lambda}\,,
\end{equation*}
which we extend to arbitrary $\mathcal{A} \in \qu{\mathfrak{A}}^{(h)}$ by continuity.

For our main result on the time evolution of coherent states, we first record the following auxiliary result whose elementary proof we omit.
\begin{lemma} \label{lemma: continuum convergence}
Let $f \in \mathscr{B}^{(p)}_c$ satisfy \eqref{bounded function for observable}. Then
\begin{equation}
\lim_{h \to 0} \rho_M \pb{\qu{S}(f)} \;=\; M(f)\,.
\end{equation}
\end{lemma}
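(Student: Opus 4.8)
The plan is a direct computation of $\rho_M(\qu{S}(f))$ that exploits the product structure of the coherent state $\ket{M_\Lambda}$. Since $f$ has compact support, $\qu{S}(f)$ coincides with the local operator $\qu{S}_\Lambda(f)$ for any bounded open $\Lambda$ whose $p$-fold product contains $\supp f$; we fix such a $\Lambda$ and expand
\[
\rho_M(\qu{S}(f)) \;=\; \sum_{i_1,\dots,i_p}\;\sum_{x_1,\dots,x_p \in \Lambda^{(h)}} f_{i_1\dots i_p}(x_1,\dots,x_p)\,\scalarb{M_\Lambda}{\qu{S}_{i_1}(x_1)\cdots\qu{S}_{i_p}(x_p)\,M_\Lambda}\,.
\]
We then compare this, term by term, with the Riemann sum
\[
R_h \;\deq\; h^{pd}\sum_{i_1,\dots,i_p}\;\sum_{x_1,\dots,x_p \in \Lambda^{(h)}} f_{i_1\dots i_p}(x_1,\dots,x_p)\,M_{i_1}(x_1)\cdots M_{i_p}(x_p)\,.
\]

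The first step is to show $R_h \to M(f)$ as $h \to 0$. The function $(x_1,\dots,x_p)\mapsto f_{i_1\dots i_p}(x_1,\dots,x_p)\,M_{i_1}(x_1)\cdots M_{i_p}(x_p)$ is bounded and compactly supported, and it is continuous off the set $\bigcup_{k}\hb{(x_1,\dots,x_p)\st x_k \in D}$, where $D\subset\R^d$ is the Lebesgue-null set of discontinuities of $M$; this set is Lebesgue-null in $\R^{pd}$, so the function is Riemann integrable and $R_h \to \sum_{i_1,\dots,i_p}\int_{\Lambda^p} f_{i_1\dots i_p}\,M_{i_1}\cdots M_{i_p} = M(f)$.

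The second step is to bound $\rho_M(\qu{S}(f)) - R_h$. On any tuple $(x_1,\dots,x_p)$ with pairwise distinct entries the operators $\qu{S}_{i_k}(x_k)$ act on distinct tensor factors of $\ket{M_\Lambda}$, so the expectation factorizes; since in this section $\qu{S}_i(x)$ carries the scaling $h^d/s$, equation \eqref{sandwich of coherent states} (applied to the $s^{-1}$-rescaled generator) gives $\scalar{M(x)}{\qu{S}_i(x)M(x)} = h^d M_i(x)$, whence $\scalarb{M_\Lambda}{\qu{S}_{i_1}(x_1)\cdots\qu{S}_{i_p}(x_p)M_\Lambda} = h^{pd}M_{i_1}(x_1)\cdots M_{i_p}(x_p)$ — exactly the matching term of $R_h$. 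Thus the difference $\rho_M(\qu{S}(f)) - R_h$ is supported on tuples with at least one coinciding pair. On those I use the crude bounds $\norm{\qu{S}_i(x)}\le\sqrt2\,h^d$ and $\abs{M_i(x)}\le1$, so each surviving summand has modulus at most $2\norm{f}_\infty(\sqrt2)^p\,h^{pd}$; all coordinates must lie in a fixed compact set, which contains only $O(h^{-d})$ lattice points, and a coinciding tuple has at most $p-1$ distinct sites, so there are only $O(h^{-(p-1)d})$ such tuples. Hence $\abs{\rho_M(\qu{S}(f)) - R_h} = O(h^{-(p-1)d})\cdot O(h^{pd}) = O(h^d)\to 0$, and combining with $R_h \to M(f)$ proves the lemma.

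As the authors indicate, the argument is elementary and presents no real obstacle; the only two points requiring a little care are balancing the powers of $h$ (one factor $h^d$ per spin operator against $O(h^{-d})$ lattice sites per coordinate) and invoking Lebesgue's criterion for Riemann integrability, which is needed because $M$ is assumed continuous only outside a null set rather than everywhere.
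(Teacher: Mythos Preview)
Your proof is correct and follows precisely the elementary route the authors have in mind; the paper in fact omits the proof entirely, so there is nothing to compare against. Your two key observations---that on distinct tuples the coherent-state expectation factorizes exactly into the Riemann sum for $M(f)$, and that the coincident tuples are too few by a factor $h^d$ to survive---are the whole content, and your use of Lebesgue's criterion to accommodate the null set of discontinuities of $M$ is exactly what the hypothesis on $M$ is there for.
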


The last result in this paper links the quantum time evolution of coherent spin states with the classical evolution in the mean-field/continuum limit when the lattice spacing $h$ tends to 0.
\begin{theorem}
Let $t \in \R$, $A \in \mathfrak{P}$ and $M$ be as described above. Let $M(t)$ be the solution of \eqref{continuum Landau-Lifschitz} on $\R^d$ with initial configuration $M$. Then
\begin{equation*}
\lim_{h \to 0} \rho_M \pb{\qu{\alpha}^t \qu{A}} \;=\; A(M(t))\,,
\end{equation*}
uniformly in $t$ on compact time intervals.
\end{theorem}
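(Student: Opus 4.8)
The plan is to follow the proof of the analogous large-spin statement (the theorem immediately preceding Section~\ref{sec:mf:limit}), with Lemma~\ref{properties of coherent spin states} replaced by Lemma~\ref{lemma: continuum convergence}, and to reduce the assertion to the Egorov estimate of Theorem~\ref{theorem: continuum egorov for bounded domain} (in its thermodynamic version) together with the classical identity $\alpha^t A = A \circ \phi^t$, which gives $A(M(t)) = (\alpha^t A)(M)$. Without loss of generality we take $A = M(f)$ for a single $f \in \mathscr{B}^{(p)}_c$ satisfying \eqref{bounded function for observable}, and we fix a compact time window $\abs{t} \le T$.

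Fix $\epsilon > 0$ and apply the thermodynamic version of Theorem~\ref{theorem: continuum egorov for bounded domain} to obtain $A(t) \in \mathfrak{P}$ with $\sup_{t \in \R}\norm{\alpha^t A - A(t)}_\infty \le \epsilon$ and $\norm{\qu{\alpha}^t\qu A - \qu{A(t)}} \le \epsilon + C(\epsilon,t,A)\,h^d$. The crucial structural observation is that, because $f$ has compact support and $V$ is of finite range, every iterated contraction $V^{(n_l)} \contr (\cdots \contr f)$ entering the construction of $A(t)$ is again compactly supported; hence $A(t)$ lies in the algebra generated by the $M(g)$ with $g \in \bigcup_q \mathscr{B}^{(q)}_c$, and $\qu{A(t)}$ is a finite linear combination of normal-ordered operators $\wick{\qu{S}(g)}$ with such $g$. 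Applying the state $\rho_M$ and inserting $\qu{A(t)}$, the triangle inequality yields
\begin{equation*}
\absb{\rho_M(\qu{\alpha}^t\qu A) - A(M(t))} \;\le\; \absb{\rho_M\pb{\qu{\alpha}^t\qu A - \qu{A(t)}}} + \absb{\rho_M(\qu{A(t)}) - A(t)(M)} + \absb{A(t)(M) - (\alpha^t A)(M)}\,.
\end{equation*}
Here the first term is $\le \epsilon + C(\epsilon,t,A)h^d$ and the third is $\le \norm{A(t) - \alpha^t A}_\infty \le \epsilon$, so everything comes down to the middle term.

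For the middle term, I would first replace each $\wick{\qu{S}(g)}$ by $\qu{S}(g)$ at the cost of a reordering error bounded in norm by $\frac{h^d}{s}\,q^2\,\norm{g}^{(h)}_1$ for $g \in \mathscr{B}^{(q)}_c$, which vanishes as $h \to 0$; Lemma~\ref{lemma: continuum convergence} then gives $\rho_M(\qu{S}(g)) \to M(g)$ evaluated at the configuration $M$, so that $\rho_M(\qu{A(t)}) \to A(t)(M)$ for each fixed $t$. To make this uniform in $\abs{t}\le T$, note that once $T$ and $\epsilon$ are fixed the construction of $A(t)$ uses only finitely many cut-offs, each finite, so $\h{A(t):\abs{t}\le T}$ lies in a fixed finite-dimensional subspace of $\mathfrak{P}$ whose coordinates depend continuously (hence boundedly) on $t$; expanding $A(t)$ in a fixed basis of this subspace reduces the claim to the finitely many convergences $\rho_M(\qu{S}(g_j)) \to M(g_j)(M)$, which are trivially uniform in $t$. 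Combining this with $\sup_{\abs{t}\le T}C(\epsilon,t,A) < \infty$ gives $\limsup_{h \to 0} \sup_{\abs{t}\le T} \absb{\rho_M(\qu{\alpha}^t\qu A) - A(M(t))} \le 2\epsilon$, and letting $\epsilon \to 0$ completes the proof.

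The main obstacle --- and the only real difference from the large-spin case --- is that Lemma~\ref{lemma: continuum convergence} furnishes only convergence, with no rate, because $M$ is merely continuous off a null set; one therefore cannot propagate an explicit $O(h^d)$ error for coherent states and must instead pass $\lim_{h\to0}$ through the fixed finite-dimensional reservoir of observables above, or equivalently --- this is the route closest to the large-spin proof --- through the norm-convergent Lie--Schwinger series, expanding $\rho_M(\qu{\alpha}^t_\Lambda \qu A) = \sum_l \tfrac{t^l}{l!}(ish^{-d})^l \rho_M\pb{\qb{\qu{H}_\Lambda,\qu{A}}^{(l)}}$ for $\abs{t} < \norm{V}^{-1}$, using that $(ish^{-d})^l\qb{\qu{H}_\Lambda,\qu{A}}^{(l)}$ equals, up to reorderings, $\sum_{n_1,\dots,n_l}\qb{\,\cdot\,}\,\qu{S}_\Lambda\pb{V^{(n_l)}\contr(\cdots\contr f)}$, and passing the limit through the sums over $(n_1,\dots,n_l)$, over $l$, and over $\Lambda$. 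The justification of these interchanges is the technical heart and rests on the $h$-uniform domination supplied by $\norm{V} = \limsup_{h\to0}\norm{V}^{(h)} < \infty$ and the bounds \eqref{continuum bracket estimate}, \eqref{continuum potential bound}, together with the preservation of compact support by finite-range $V$; arbitrary times are then reached by iteration exactly as in the proof of Theorem~\ref{theorem: continuum egorov for bounded domain}.
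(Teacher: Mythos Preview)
Your proposal is correct and rests on the same three ingredients as the paper --- the $h$-uniform convergence of the Lie--Schwinger series from Theorem~\ref{theorem: continuum egorov for bounded domain}, the preservation of compact support under contraction with the finite-range $V$, and Lemma~\ref{lemma: continuum convergence} --- so the two arguments are morally the same. The organizational difference is that the paper bypasses the black-box Egorov statement and works directly with the series: for $\abs{t}<\norm{V}^{-1}$ it truncates both the classical and quantum Lie--Schwinger expansions so that their tails are $\le\epsilon$, leaving a finite sum of terms $\absb{\rho_M(\qu{S}(g)) - M(g)}$ with $g\in\mathscr{B}^{(p(g))}_c$, each of which is made $\le\epsilon$ by Lemma~\ref{lemma: continuum convergence}; iteration then reaches all times. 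This is exactly the ``alternative route'' you describe in your last paragraph. Your primary route --- inserting the polynomial $A(t)$ produced by Theorem~\ref{theorem: continuum egorov for bounded domain} and using the triangle inequality --- is a clean repackaging that makes the role of Egorov explicit; the extra step it incurs is the finite-dimensional uniformity argument for $\{A(t):\abs{t}\le T\}$, which the paper's direct cutoff avoids since the truncated series is already manifestly polynomial in $t$ with a fixed finite set of summands.
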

\begin{proof}
The proof is a corollary of the proof of Theorem \ref{theorem: continuum egorov for bounded domain}. First, let $\abs{t} < \norm{V}^{-1}$ and pick an $\epsilon > 0$. Choose a cutoff such that the tails of the thermodynamic limits of the series \eqref{classical series for continuum theory} and \eqref{qm series for continuum theory} are bounded by $\epsilon$. We therefore have to estimate a finite sum of terms of the form
\begin{equation*}
\absb{\rho_M \pb{\qu{S}(g)} - M(g)}\,,
\end{equation*}
where $g \in \mathscr{B}^{(p(g))}_c$ because of our assumptions on $V$.
By Lemma \ref{lemma: continuum convergence}, for $h$ small enough, these are all bounded by $\epsilon$, and the claim for small times follows. Finally, by iteration, we extend the result to arbitrary times.  \end{proof}

\noindent
{\bf Acknowledgement.} We thank Simon Schwarz for very useful discussions.


\begin{thebibliography}{99}

\bibitem{FKP2007}
J.~Fr\"ohlich, A.~Knowles, and A.~Pizzo, \emph{Atomism and Quantization}, J.\ Phys.\ A \textbf{40} (2007), no.~12, 3033--3045.

\bibitem{GKT2007}
S.~Gustafson, K.~Kang, and T.-P.~Tsai, \emph{Schr\"odinger Flow near Harmonic Maps}, Comm.\ Pure Appl.\ Math. \textbf{60} (2007), no.~4, 463--499.

\bibitem{KruzikProhl2006}
M.~Kru{\v{z}}{\'{\i}}k and A.~Prohl, \emph{Recent Developments in the Modeling, Analysis, and Numerics of Ferromagnetism}, SIAM Rev.\ \textbf{48} (2006), no.~3, 439--483.

\bibitem{Lieb1973}
E.~H.~Lieb, \emph{The Classical Limit of Quantum Spin Systems}, Comm.\ Math.\ Phys.\ \textbf{31} (1973), 327--340.

\bibitem{Simon1980}
B.~Simon, \emph{The Classical Limit of Quantum Partition Functions}, Comm.\ Math.\ Phys.\ \textbf{71} (1980), 247--276.

\bibitem{SulemBardos1986}
P.-L.~Sulem, C.~Sulem, and C.~Bardos, \emph{On the Continuous Limit for a System of Classical Spins}, Comm.\ Math.\ Phys.\ \textbf{107} (1986), no.~3, 431--454.


\bibitem{Werner1995}
R.~F.~Werner, \emph{The Classical Limit of Quantum Theory}, quant-ph/9504016v1.

\end{thebibliography}
\end{document}